\documentclass{article}
\usepackage{graphicx,caption}
\usepackage{pgf,tikz}
\usetikzlibrary{arrows}
\usepackage{array,multirow,makecell}
\setcellgapes{1pt} \makegapedcells
\usepackage{amssymb}
\usepackage{subfig}
\usepackage[normalem]{ ulem }
\usepackage{soul}
\usepackage{graphicx}
\usepackage{here}
\usepackage{epsfig}
\usepackage{color}
\usepackage{pgf,tikz,amsthm}
\usetikzlibrary{arrows}
\usepackage{amsmath}
\usepackage{epstopdf}

\usepackage{breqn}

\usepackage{hyperref}
\hypersetup{
    colorlinks=true,
    linkcolor=blue,
    filecolor=magenta,      
    urlcolor=cyan,
    citecolor=red
}
\usepackage{verbatim,cleveref}
\usepackage[]{geometry}
\newtheorem{theorem}{\textbf{Theorem}}

\theoremstyle{definition}

\newcommand{\N}{\mathcal{N}}

\newcommand{\Q}{\mathcal{Q}}

\usepackage{tikz}
\usepackage{anyfontsize}

\usepackage[backend=biber, bibstyle=numeric-comp, citestyle=numeric-comp, giveninits=true, sorting=nyt, maxcitenames=2,  sortcites=true, maxbibnames=99, isbn=false, url=false, doi=true, date=year, eprint=true, uniquelist=false, uniquename=false]{biblatex}

\title{Analysis of a two patch model for disease vector-animal dynamics with \blank{non-linear} anthropization-driven migration}
\author{O.W. Happi-Tchakount\'e$^{1}$,   I.V. Yatat-Djeumen$^{1,2,3}$,  L. Eigentler$^{4,5}$\footnote{Corresponding author: lukas.eigentler@warwick.ac.uk}, P. Couteron$^{3}$ \\	
\small $^1$University of Yaound\'e I, National Advanced School of Engineering of Yaound\'e, \\ 
	\small Department of Mathematics and Physics, Yaound\'e, Cameroon \\
\small $^2$AMAP, Univ Montpellier, IRD, CIRAD, CNRS, INRA Montpellier, France \\
\small $^3$IRD, Sorbonne Université, UMMISCO, F-93143, Bobby, France \\
\small $^4$ Warwick Mathematics Institute, University of Warwick, Coventry CV4 7AL, United Kingdom \\
\small $^5$ Zeeman Institute for Systems Biology \& Infectious Disease Epidemiology Research, \\ \small University of Warwick, Coventry CV4 7AL, United Kingdom 
}
\date{\today}

\newcommand{\blank}[1]{\textcolor{black}{#1}}

\begin{document}

\maketitle

\begin{abstract}
Landscape dynamics are key drivers of the movement and distribution of sylvatic \blank{hematophagous} disease vectors and their (wild) animal hosts. Their habitats are undergoing increasing change, particularly fragmentation, through anthropogenic activity. In this article, we present and analyse a novel mathematical model that explicitly combines anthropization-induced landscape dynamics with the population dynamics of hematophagous vectors and (wild) animals dynamics. We develop a phenomenological and analytically tractable two-patch model in which the migration terms between the patches \blank{nonlinearly} depend on the anthropization level of the patches. Our model analysis comprising analytical stability analysis and numerical bifurcation analysis provides information on how changes in model parameters, especially anthropization levels, shape the long-term dynamics in the model. Precisely, we find that low anthropogenic activity allows for a vector-animal coexistence state, while high anthropization leads to a vector extinction state. However, we establish that for intermediate anthropization levels, the transition between the two states is not necessarily monotonic, but may instead occur via a sequence of concurrent bifurcations along the anthropization axis.
\end{abstract}

\section{Introduction}

Vector-borne diseases (VBD) affect approximately one billion people and represent 17\% of all infectious diseases \cite{tibayrenc2024genetics}. The progression of transmissible diseases such as VBD, as described by Pavlovsky (see e.g., \cite{Emmanuel2011,Lambin2010}), hinges on the continuous interaction of five essential prerequisites: (1) the donor animals (or host, which may be humans), (2) the vectors involved, (3) the recipient animals, (4) the infectious pathogenic agent, and (5) external environmental influences that facilitate the spread of the pathogen \cite{Emmanuel2011,Lambin2010}. This is also referred to as the landscape epidemiology. The specific spatial arrangement and interactions of these five components within a given area directly influence the observed infection risk and may play a role in disease emergence. Indeed, the dynamics of landscapes (as a result of climate change, human action, etc.) significantly modify vectors and animal hosts habitats, influence the movement of vectors and animal hosts, which in turn affects the epidemiology of vector-borne diseases \cite{Crowder2013,Ferraguti2021,Fletcher2018,Lambin2010,Shaw2024,Tracey2014}. Understanding these interactions is crucial for predicting disease transmission patterns and managing public health  risks \cite{Patz2004}.

Vectors are pivotal in vector-borne disease transmission, acting as conduits that move pathogens either mechanically or biologically \cite{tibayrenc2024genetics, higley1989manual}. Mechanical transmission is when the vector acts as a passive carrier, simply transporting a pathogen on its body (e.g., legs, mouth) without being infected itself and transferring it to a new host through physical contact or by contaminating food, e.g. a fly landing on fresh food after coming into contact with faeces \cite{higley1989manual}. Biological transmission happens when the pathogen's development or replication cycle necessitates its presence inside the vector \cite{SAUCEDO2022742, higley1989manual}. A vector ingests the pathogen during feeding (blood meal); subsequently, the pathogen undergoes an incubation phase. Only once this phase is complete can the vector successfully infect a new organism, frequently by inoculation during a subsequent blood meal \cite{SAUCEDO2022742, higley1989manual}. 

Landscapes that are habitats of disease vectors and animal hosts (among others) are currently changing at an unprecedented rate across the globe due to anthropization as land use is correlated with the size of the human population \cite{Ellis2013,KathleenLyons2016}. Anthropization leads to biodiversity loss \cite{KathleenLyons2016}, because it causes changes to local microclimatic conditions \cite{Williams2020}. For example, habitat loss significantly reduced the population size and demographic health of the specialist herbivore \textit{Chelinidea vittiger} (a species of leaf-footed bug), with severe effects observed after a threshold of approximately $70\%-80\%$ patch loss \cite{Fletcher2018}. Moreover, forest canopy removal typically leads to warmer \cite{Alkama2016,Senior2017} and drier \cite{Frishkoff2016a} conditions and the intensification of temperature extremes (both hot and cold) \cite{DeFrenne2019,Alkama2016}.  The tropics, such as the Amazon or Sub-Saharan Africa, where vector-borne diseases are common, are particularly vulnerable to anthropization, with low levels in land-use changes leading to significant biodiversity loss \cite{Cantera2022}. Specifically, landscape transformation across Africa shows persistent expansion of cropland and settlements, large-scale conversion of woody and natural vegetation, and rising human appropriation of ecosystem productivity. Regional hotspots show rapid change over recent decades with measurable losses of natural cover and increases in built-up area and cropland \cite{Bullock2021-vt,Chiaka2021-hw,Fetzel2016-xc,Kimani2021-jb}. Indeed, Africa-wide human appropriation of net primary production reached about 20\% in 2005 and grew by roughly 55\% since 1980, with large regional differences in intensity and efficiency of biomass use \cite{Kimani2021-jb}. According to \cite{Bullock2021-vt}, cropland area in a seven-country East Africa sample increased by ~18,154,000 ha ($\approx$ 34.8\%) between 1998–2017, and Settlements area rose $\approx$43.5\% over the same period; woody classes were largely converted to less-woody uses with ~20 million hectares more conversion than succession toward woody recovery. In the same vein, \cite{Fetzel2016-xc} reported that natural cover around major mining agglomerations in southeastern Katanga (in Democratic republic of Congo) lost $>$60\% of its area between 1979 and 2020, with agricultural and built-up areas increasing sharply. Moreover, it was estimated that in Ivory Coast, dense forest and degraded forest in the south‑west declined at rates of about 1.44\%/yr and 3.44\%/yr respectively from 1987–2015, driven primarily by conversion to agriculture \cite{Chiaka2021-hw}.

Landscape structure, specifically attributes affected by anthropization such as fragmentation and connectivity, is widely recognized as a major factor influencing the complex interactions among hosts, vectors, and pathogens. For populations capable of range expansion, changes in land use leads to increased migration \cite{Frishkoff2016a}, which will be the main focus of this paper. Changes in habitat integrity can therefore alter disease spread; in particular, vector mobility is essential to the transmission cycle, as it disperses pathogens to previously unexposed hosts \cite{SAUCEDO2022742}. Specifically, local disease spread is facilitated by vector movement, while the relocation of infected hosts (such as humans) is responsible for long-distance dissemination of the disease.
Similarly, the interplay between landscape dynamics and mosquito ecology, as highlighted by \cite{Crowder2013} and \cite{Ferraguti2021}, critically affects the transmission dynamics of pathogens such as the West Nile virus and avian malaria.

The study of vector spatio-temporal dynamics has been the subject of several contributions that relied on mathematical models. The proposed modelling frameworks include reaction‑diffusion-advection partial differential equations (PDE), patch/metapopulation frameworks, stochastic and agent‑based simulations, 
see for instance \cite{Cailly2011,Cui2025-bt,daSilva2020,Dufourd2012-dd,Dufourd2013-uu, Dye2024-sw, Hancock2019,Heath2022, Lutambi2013, Lutambi2013MBS, Maneerat2016-th,McCormack2019,nguyen:tel-04687836, Roques2016} and references therein. Reaction-advection-diffusion models were proposed in \cite{Dufourd2012-dd,Dufourd2013-uu,dumont_spatio-temporal_2011} to model mosquito dispersal while taking into account environmental parameters, like wind, temperature, or landscape elements as well as a biological control. Specifically, the study by Dumont et al. \cite{dumont_spatio-temporal_2011} initially focused on a dispersal model for adult female mosquitoes, categorizing them into two classes: blood-meal searchers and breeding-site searchers. Subsequently, Dufourd et al. \cite{Dufourd2012-dd} expanded upon this foundation by incorporating the aquatic stage (larvae/pupae) and several adult mosquito classes: immature females, resting females, wild males, and sterile males. This comprehensive approach yielded a reaction-advection-diffusion model describing the mosquito population dynamics.
In their analysis, Dufourd et al. \cite{Dufourd2012-dd} introduced landscape heterogeneity by modeling it solely through the diffusion coefficient. They postulated that areas with lower diffusion rates represented favorable habitats where mosquitoes would exhibit prolonged residency (i.e., spend more time). Through numerical simulations of the resulting model, the authors demonstrated a significant impact of landscape structure on both mosquito spatial distribution and the efficacy of vector control strategies.
More recently, authors in \cite{Cui2025-bt,daSilva2020,,nguyen:tel-04687836, Roques2016} studied a reaction-diffusion model for mosquito population dynamics. Specifically, the model proposed in \cite{Cui2025-bt} accounts for seasonal variations in temperature across different geographical regions that directly influence the life cycle and reproduction rates of mosquitoes.

In \cite{Hancock2019}, authors develop spatially explicit models of Wolbachia-\textit{A. aegypti} dynamics that aims to incorporate realistic patterns of demographic variation in the mosquito population. Their  metapopulation model describes the spatially heterogeneous demography of \textit{A. aegypti} using empirically validated relationships that link density-dependent demographic traits to mosquito abundance. They showed that their model can produce rates of spatial spread of Wolbachia similar to those observed in field releases and can be applied to compare the performance of different strategies. In the same vein, metapopulation models for mosquito population dynamics are also proposed in \cite{Cailly2011, Dye2024-sw, Lutambi2013, Lutambi2013MBS}. Specifically,  \cite{Lutambi2013, Lutambi2013MBS} utilized a discrete-space, continuous-time (multi-patch) mathematical framework used to investigate how mosquito dispersal and the non-uniform distribution of resources (such as blood hosts and breeding sites) impacted the spatial distribution and overall population dynamics of the mosquitoes.	Their results clearly indicated that landscape heterogeneity and its temporal changes significantly affect both the distribution and the numerical abundance of the mosquito population.
Furthermore, \cite{McCormack2019} develop a stochastic metapopulation model of mosquito population dynamics and explore the impact of accounting for breeding site fragmentation when modeling fine-scale mosquito population dynamics. They found that when mosquito population densities are low, fragmentation can lead to a reduction in population size, with population persistence dependent on mosquito dispersal and features of the underlying landscape. The authors in \cite{Maneerat2016-th} proposed a spatially explicit behavioral agent-based simulation model of \textit{the} female mosquito \textit{Aedes aegypti} that examines the effects of neighborhood-scale variables (including human population density, the concentration of breeding sites and topological features) on the mosquito's population cycle. 

We also acknowledge a body of literature dedicated to the spatio-temporal analysis of vector-borne disease models. These studies often concurrently examine vector, human, and/or animal populations and the spread of the pathogen (e.g., \cite{Anzo-Hernandez2018-to,Arino2012-mp,Auger2008-lc,Esteva2019-mf,Gimenez-Mujica2025-sf,Iggidr2017-jj,Moulay2013-uu} and references therein).  We will not delve into these models as this paper solely considers population dynamics between vectors and their host, but does not model disease dynamics.

\blank{Crucially, an understanding of how anthropization-induced dispersal affects vector-animal dynamics is currently lacking. To this end, this paper presents a novel mathematical model that}  explicitly combines anthropization-induced landscape dynamics with hematophagous vectors and (wild) animals dynamics. We do so by introducing a new mathematical model of these dynamics on a network of two connected patches (\Cref{sec: model}). \blank{We specifically construct a phenomenological model that is analytically tractable to obtain qualitative insights into the impacts of anthropization on the population dynamics. We regard these as a necessary foundation for future work with more detailled models that, for example, consider more complex landscapes or couple the population dynamics to disease spread.} We exploit \blank{the analytical tractability of our model} by performing a stability analysis for the model's steady states in \Cref{sec: lin stab analysis} to reveal the impact of increased anthropization on the vector-animal dynamics. While exact, the derived stability conditions are 
\blank{complemented by} 
a numerical bifurcation analysis using numerical continuation  to gain better insights into these conditions (\Cref{sec: numcont}). Finally, \blank{we} discuss our results in \Cref{sec: discussion}.

\section{Model}\label{sec: model}
We start by introducing a new phenomenological model that describes the interactions between \blank{a hematophagous} vector population and their (wild) animal host on a network of connected patches affected by anthropization. In the interest of analytical tractability, we restrict the number of patches to two in this paper \blank{(see the discussion for a framework on $n \in \mathbb{N}$ patches).}

We consider a discrete spatial domain comprising two patches and we let $V_{i}$ and $A_{i}$ ($i= 1,2$) denote the \blank{hematophagous} vector population size in patch $i$ and the animal population size in patch $i$, respectively. \blank{Our objective is to understand the impact of anthropization, viewed through the lens of natural landscape fragmentation, on the dynamics of both hematophagous vectors and (wild) animal population. To model this, we introduce $\alpha_{i}\in[0,1)$ as a measure of anthropization/fragmentation pressure on the patch $i$. An $\alpha_{i}$ 
value of 0 implies no human-induced fragmentation, while values nearing 1 suggest almost complete anthropic fragmentation.}

We assume that in patch $i=1,2$, animals grow logistically with maximum growth rate $r_A>0$. In the absence of anthropization, animals have carrying capacity $K_{A,i}>0$. \blank{To maintain model simplicity, we \blank{assume} that substantial anthropic fragmentation of natural landscapes profoundly affects wild animal dynamics by \blank{both} diminishing their carrying capacity (see also \cite{dumont2025humanwildlifeinteractionstropicalforest} for a similar consideration) \blank{and triggering animal dispersal}. We also posit that the primary consequence of anthropization on  blood-feeding vectors is the stimulation of dispersal. Specifically, we \blank{assume} that in a pristine environment where the fragmentation level is zero ($\alpha_{i}=0$), we may expect no displacement of either vectors or animals. This stems from our baseline premise that every patch constitutes a viable habitat, removing the necessity for \blank{migration} when $\alpha_{i}=0$. Furthermore, we model the relationship between anthropic fragmentation and movement as a \blank{sigmoidal} response dictated by a critical threshold. That is, when $\alpha_{i}$ is low, displacement rates remain negligible whereas, when $\alpha_{i}$ is high, displacement rates for vectors and animals escalate significantly. Essentially, the impact of anthropic fragmentation on dispersal follows two distinct regimes—low and high—separated by a tipping point. We then model the feedback of $\alpha_{i}$ on the displacement rates by the functions
\begin{equation}\label{mAmV}
\begin{array}{l}
D_A(\alpha_{i})=\dfrac{\alpha_{i}^n}{\alpha_{i}^n+c_{A}^n},\\
D_V(\alpha_{i})=\dfrac{\alpha_{i}^n}{\alpha_{i}^n+c_{V}^n}
\end{array}
\end{equation}
where $n\blank{>1}$ controls the shape of the function. \blank{The parameter} $c_{A}>0$ (resp. $c_{V}>0$) is the animal-related (resp. vector-related) half saturation \blank{constant}.}

\blank{Moreover,} vectors grow logistically with maximum growth rate $r_V>0$. \blank{However, to \blank{facilitate larvae production}, hematophagous vectors ($V_i$) must obtain blood meals from wild animal hosts ($A_i$). Field observations and empirical data \cite{Likeufack2025,lord_host-Seeking_2017} indicate that tsetse fly females, for example, typically seek a blood meal every 2.5 days to complete their reproductive cycle. This biological requirement establishes a direct link between successful feeding and the vector's birth rate. To represent the beneficial impact of host availability on the vector population, we utilize a Monod-style functional response \cite{Tewa2013}:$$\dfrac{A_i}{A_i+a_i}$$ where $a_i>0$ represents the half-saturation constant. Indeed, while more animals ($A_i$) generally mean more larvae, there is a ``bottleneck'' where the vectors are limited by their own handling time or digestion speed, regardless of how many hosts are available. \blank{Further, we assume that} vector migration away from patch $i$ is decreasing with animal abundance in patch $i$, and the constant $b_i>0$ quantifies the strength of this dependence. The wild animals (resp. hematophagous vectors) maximum migration rate is $d_{A,ij}\ge 0$ (resp. $d_{V,ij}\ge 0$). In terms of migration, mass is conserved in the system, i.e., any mass emigrating from patch $i=1,2$ immigrates to patch $j\blank{\ne i}$. }

All population densities are assumed to decrease at constant rates ($\mu_{V,i}, \mu_{A,i} \ge 0)$ representing natural death. \blank{In the interest of generality, death rates, as well as the vector half saturation constants ($a_i$) may differ between patches. This accounts for differences between patches that are otherwise not explicitly described (i.e. properties other than anthropization).}
\blank{The resulting model with non-linear migration terms is}
\begin{equation}\label{2p-model}
    \left\{
\begin{array}{l}
     \dfrac{dV_{1}}{dt}= r_V\dfrac{A_{1}}{A_{1}+a_1}V_{1}\left(1-\dfrac{V_{1}}{K_{V,1}}\right)-d_{V,12}\dfrac{\alpha_{1}^n}{\alpha_{1}^n+c_V^n}\dfrac{1}{1+b_{1}A_{1}}V_{1}+d_{V,21}\dfrac{\alpha_{2}^n}{\alpha_{2}^n+c_V^n}\dfrac{1}{1+b_{2}A_{2}}V_{2}-\mu_{V,1}V_1,\\
     
\dfrac{dA_{1}}{dt}= r_AA_{1}\left(1-\dfrac{A_{1}}{(1-\alpha_{1})K_{A,1}}\right)-d_{A,12}\dfrac{\alpha_{1}^n}{\alpha_{1}^n+c_A^n}A_{1}+d_{A,21}\dfrac{\alpha_{2}^n}{\alpha_{2}^n+c_A^n}A_{2}-\mu_{A,1}A_1,\\
     
     \dfrac{dV_{2}}{dt}= r_V\dfrac{A_{2}}{A_{2}+a_2}V_{2}\left(1-\dfrac{V_{2}}{K_{V,2}}\right)-d_{V,21}\dfrac{\alpha_{2}^n}{\alpha_{2}^n+c_V^n}\dfrac{1}{1+b_{2}A_{2}}V_{2}+d_{V,12}\dfrac{\alpha_{1}^n}{\alpha_{1}^n+c_V^n}\dfrac{1}{1+b_{1}A_{1}}V_{1}-\mu_{V,2}V_2,\\

     \dfrac{dA_{2}}{dt}= r_AA_{2}\left(1-\dfrac{A_{2}}{(1-\alpha_{2})K_{A,2}}\right)-d_{A,21}\dfrac{\alpha_{2}^n}{\alpha_{2}^n+c_A^n}A_{2}+d_{A,12}\dfrac{\alpha_{1}^n}{\alpha_{1}^n+c_A^n}A_{1}-\mu_{A,2}A_2.\\
\end{array}
    \right.
\end{equation}

We set $x = (V_{1}, A_{1},V_{2}, A_{2})$ and
$\mathcal{D}=\mathbb{R}^4_+=\{x\in\mathbb{R}^4:x\geq\textbf{0}\}$. Then model (\ref{2p-model}) can be written in the form 

\begin{equation}\label{ODE-system}
\frac{dx}{dt}=f(x),
\end{equation}
where $f:\mathbb{R}^4\to\mathbb{R}^4$ represents the right hand side of (\ref{2p-model}). Function $f$ is continuous and continuously differentiable on $\mathbb{R}^4$. Thus, according to \cite[Theorem III.10.VI]{Walter1998}, for any initial condition a unique solution exists, at least locally. The vector field defined by $f$ is either tangential or directed inwards on $\partial D$. Therefore, for any initial condition in $\mathcal{D}$ the solution of (\ref{2p-model}) remains in $\mathcal{D}$ for its maximal interval of existence \cite[Theorem III.10.XVI]{Walter1998}.

\section{Results}\label{sec: lin stab analysis}
The phenomenological nature of our model makes possible a stability analysis to determine asymptotic behaviour of model solutions. To do so, it is instructive to start with a simplified model in which no migration between patches occur (\Cref{sec: no migration}), then extend the analysis to a case of one-way migration (\Cref{sec: one way mig}), before finally analysing the full model (\Cref{sec: full model}). For ease of reading, we only state results in these sections, with proofs shown in the Appendix. 

\subsection{No migration}\label{sec: no migration}
We start our analysis by considering the simplified case of no migration ($d_{V,12}=d_{V,21}=d_{A,12}=d_{A,21}=0$) between patches. This could, for example, represent patches which surroundings are sufficiently hostile to prevent any emigration from or migration to the patch. Moreover, these results will form a natural starting point for a numerical bifurcation analysis using continuation in \Cref{sec: numcont}.

     
     


In this case, the dynamics between patches decouple and it is thus sufficient to consider a reduced model for a single patch only. In patch 1, the dynamics are governed by the system of differential equations
\begin{equation}\label{1p-model-no-diffusion}
    \left\{
\begin{array}{l}
     \dfrac{dV_{1}}{dt}= r_V\dfrac{A_{1}}{A_{1}+a_1}V_{1}\left(1-\dfrac{V_{1}}{K_{V,1}}\right)-\mu_{V,1}V_1,\\
     
\dfrac{dA_{1}}{dt}= r_AA_{1}\left(1-\dfrac{A_{1}}{(1-\alpha_{1})K_{A,1}}\right)-\mu_{A,1}A_1.\\
 \end{array}
    \right.
\end{equation}

To develop an understanding of the steady states of \eqref{1p-model-no-diffusion} and their stability, it is intuitive to define
\begin{equation}
 \label{thresholds}
 \mathcal{R}_{0A,i}=\dfrac{r_A}{\mu_{A,i}},\quad \mathcal{R}_{0V,i}=\dfrac{r_V}{\mu_{V,i}}\dfrac{(1-\alpha_{i})K_{A,i}\left(1-\dfrac{1}{\mathcal{R}_{0A,i}}\right)}{(1-\alpha_{i})K_{A,i}\left(1-\dfrac{1}{\mathcal{R}_{0A,i}}\right)+a_i}, \quad i=1,2.
\end{equation}

From a biological point of view, the threshold $\mathcal{R}_{0A,i}$ denotes the animal population `intrinsic' growth rate (ratio of maximum birth rate and death rate). Similarly, the ratio $\dfrac{r_V}{\mu_{V,i}}$ represents the vector population `intrinsic' growth rate. Consequently, $\mathcal{R}_{0V,i}$ quantifies for the animal-mediated, and anthropization-mediated vector population growth rate, on patch $i=1,2$.

By setting to zero the right hand side of \eqref{1p-model-no-diffusion}, and using linear stability analysis, we obtain the following information on the system's steady states. For a visualisation of these conditions, see \Cref{fig: no mig stab}.
\begin{theorem}\label{theorem-global-stability}
\begin{enumerate} 
    \item The extinction state $e_{00}=(0,0)$ is always an equilibrium. It is globally asymptotically stable (GAS) whenever $\mathcal{R}_{0A,1}\leq1$.
    \item The vector extinction state $e_{0A}=\left(0,(1-\alpha_{1})K_{A,1}\left(1-\dfrac{1}{\mathcal{R}_{0A,1}}\right)\right)$ is an equilibrium if $\mathcal{R}_{0A,1}>1$. It is GAS if $\mathcal{R}_{0V,1}\leq1$.  
    \item The coexistence state $$e_{VA}=\left(K_{V,1}\left(1-\dfrac{1}{\mathcal{R}_{0V,1}}\right),(1-\alpha_{1})K_{A,1}\left(1-\dfrac{1}{\mathcal{R}_{0A,1}}\right)\right)$$ is an equilibrium if $\mathcal{R}_{0A,1}>1$ and $\mathcal{R}_{0V,1}>1$. It is GAS in its entire existence region.    
\end{enumerate}
\end{theorem}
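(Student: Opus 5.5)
The plan is to exploit the triangular structure of \eqref{1p-model-no-diffusion}: the $A_1$-equation does not involve $V_1$. It is a scalar logistic equation,
\[
\dot A_1 = \mu_{A,1}A_1\bigl(\mathcal{R}_{0A,1}-1\bigr) - \frac{r_A}{(1-\alpha_1)K_{A,1}}A_1^2 .
\]
We first treat this equation completely. If $\mathcal{R}_{0A,1}\le 1$, then $\dot A_1\le -\frac{r_A}{(1-\alpha_1)K_{A,1}}A_1^2$, so $A_1(t)\to 0$; this holds even in the borderline case $\mathcal{R}_{0A,1}=1$, where the decay is like $1/t$. If $\mathcal{R}_{0A,1}>1$, the explicit logistic solution shows that $A_1(t)\to A_1^*:=(1-\alpha_1)K_{A,1}(1-1/\mathcal{R}_{0A,1})$ for every $A_1(0)>0$. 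Since $A_1(0)=0$ gives $A_1\equiv 0$, global attraction here should be understood on the set $\{A_1>0\}$, and similarly $\{V_1>0\}$ for the vector component. The equilibria listed in the statement follow directly by setting the right-hand sides to zero. Boundedness of $V_1$ comes from $\dot V_1\le r_V V_1(1-V_1/K_{V,1})$, which gives $\limsup V_1\le K_{V,1}$, so every orbit is bounded and forward complete.

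For the $V_1$-equation we view it as an asymptotically autonomous scalar equation
\[
\dot V_1 = g(t,V_1):= r_V\frac{A_1(t)}{A_1(t)+a_1}V_1\Bigl(1-\frac{V_1}{K_{V,1}}\Bigr)-\mu_{V,1}V_1 ,
\]
whose limit equation is obtained by replacing $A_1(t)$ with its limit $\bar A\in\{0,A_1^*\}$. In case 1, $\bar A=0$ and the limit equation is $\dot V=-\mu_{V,1}V$. More directly, for $t\ge T$ we have $A_1/(A_1+a_1)\le\varepsilon$, so $\dot V_1\le (r_V\varepsilon-\mu_{V,1})V_1$ and $V_1\to0$ exponentially once $\varepsilon<\mu_{V,1}/r_V$. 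This needs $\mu_{V,1}>0$, which I would assume (as implicitly required by the definition of $\mathcal{R}_{0V,i}$). In cases 2 and 3, $\bar A=A_1^*$ and the limit equation is
\[
\dot V=\mu_{V,1}V\bigl(\mathcal{R}_{0V,1}-1\bigr)-\mu_{V,1}\mathcal{R}_{0V,1}\frac{V^2}{K_{V,1}},
\]
a logistic equation. Its positive solutions tend to $0$ when $\mathcal{R}_{0V,1}\le1$ and to $V^*=K_{V,1}(1-1/\mathcal{R}_{0V,1})$ when $\mathcal{R}_{0V,1}>1$. To transfer these conclusions to the nonautonomous equation, I would use a comparison argument: for any $\varepsilon>0$ and $t\ge T(\varepsilon)$, $V_1$ is sandwiched between solutions of the logistic equations obtained with $A_1^*\pm\varepsilon$. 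Letting $\varepsilon\to0$ and using continuity of the limits in $\varepsilon$ gives convergence. Alternatively, one could invoke Thieme's theory of asymptotically autonomous systems (or Castillo-Chavez--Thieme), together with the fact that the limiting 2D system has no cycles because it is triangular.

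Local asymptotic stability is read off the Jacobian, which is lower triangular. Its eigenvalues are $\mu_{A,1}(\mathcal{R}_{0A,1}-1)$ or $-\mu_{A,1}(\mathcal{R}_{0A,1}-1)$, and $\mu_{V,1}(\mathcal{R}_{0V,1}-1)$ or its negative, depending on the equilibrium. Combining this with global attraction gives GAS.

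The main obstacle is the borderline case $\mathcal{R}_{0V,1}=1$ in item 2. There the upper comparison equation with $A_1^*+\varepsilon$ has a small positive equilibrium $O(\varepsilon)$, so we only get $\limsup V_1\le O(\varepsilon)$ for each $\varepsilon$. This still yields $V_1\to0$ after letting $\varepsilon\to0$, but the argument has to be written carefully, and linear stability is not available in that case. A Lyapunov-type argument with $L=V_1$ on the limiting system, combined with LaSalle's invariance principle, gives a clean alternative.
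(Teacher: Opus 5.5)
Your proposal is correct and follows the same route as the paper's proof. You treat the decoupled logistic $A_1$-equation first, then deduce the behaviour of $V_1$ from the limiting (asymptotically autonomous) equation, a step the paper merely invokes as a ``standard limit system argument''. Your comparison argument makes that step explicit, as do the caveats $\mu_{V,1}>0$ and attraction only on $\{A_1>0\}$ (resp.\ $\{A_1>0,V_1>0\}$), and your handling of the borderline case $\mathcal{R}_{0V,1}=1$.

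One small repair: $\dot V_1\le r_VV_1(1-V_1/K_{V,1})$ is not literally true for $V_1>K_{V,1}$. However, there the growth term is nonpositive, so $\dot V_1\le-\mu_{V,1}V_1$, and the interval $[0,K_{V,1}]$ is forward invariant and reached in finite time. This gives your bound and also justifies the sandwich between the logistic equations with $A_1^*\pm\varepsilon$, since monotonicity in $A_1$ needs $V_1\le K_{V,1}$.
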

\begin{proof}
    See Appendix \ref{AppendixC0}. 
\end{proof}

\begin{figure}
    \centering
    \includegraphics{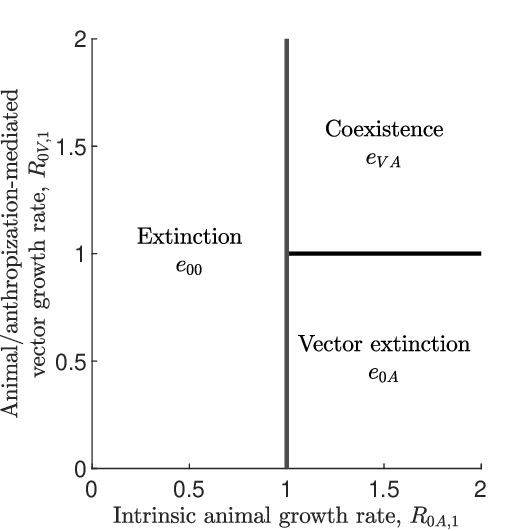}
    \caption{Stable steady states in the decoupled single-patch model (no migration). A classification of stable steady states is shown in the ($\mathcal{R}_{0A,i}$,$\mathcal{R}_{0V,i}$) parameter plane, where $\mathcal{R}_{0A,i}$ denotes the intrinsic animal growth rate and $\mathcal{R}_{0V,i}$ denotes the animal and anthropization-mediated vector growth rate. Note that no bistability occurs.}
    \label{fig: no mig stab}
\end{figure}

\subsection{Biological interpretation of stability results in the absence of migration}
    The linear stability analysis highlights that prevention of extinction requires the animal population's maximum growth rate to be larger than its death rate ($\mathcal{R}_{0A,1}<1$). In particular, since vector growth relies on the presence of animals, no vector persistence is possible, if this condition is violated. For sufficiently high intrinsic animal growth rate ($\mathcal{R}_{0A,1}>1$), long-term dynamics of the system either result in coexistence of vectors and animals, or animal persistence and vector extinction.
    Coexistence requires the vector and anthropization-mediated vector growth rate to be sufficiently large ($\mathcal{R}_{0V,1}>1$). This means that, the half-saturation parameter $a_1$ needs to be sufficiently small, i.e., the efficiency of vectors of finding and feeding on the animal host. That is, acknowledging that vector `intrinsic' offspring number $\dfrac{r_V}{\mu_{V,1}}$ is greater than one, $a_1$ must satisfy 
    
    $$0<a_1<(1-\alpha_{1})K_{A,1}\left(1-\dfrac{1}{\mathcal{R}_{0A,1}}\right)\left(\dfrac{r_V}{\mu_{V,1}}-1\right):=a_1^{\max}.$$
Conversely, if vectors are less efficient at finding or utilizing the host (i.e. $a_1>a_1^{\max}$), the animal-mediated vectors offspring number $\mathcal{R}_{0V,1}$ may fall below one and lead extinction of the vectors. In addition, both $\mathcal{R}_{0V,1}$ and $a_1^{\max}$ are decreasing functions of the anthropization level $\alpha_1$. Hence, an increase of $\alpha_1$ leads a decrease of $\mathcal{R}_{0V,1}$ and $a_1^{\max}$ which, in turn, imply the extinction of vectors in patch 1.

\subsection{One-way migration}\label{sec: one way mig}
We next consider the case in which one of the patches (here chosen to be patch 2) is not anthropized at all, i.e. $\alpha_2$. This leads to a model in which population movement is unidirectional from patch 1 to patch 2. This corresponds, for example, to a system comprising a city (patch 1) and a nearby unanthropized forest. Under this assumption, \eqref{2p-model} becomes

\begin{equation}\label{2p-model-1way-diffusion}
    \left\{
\begin{array}{l}
 \dfrac{dV_{1}}{dt}= r_V\dfrac{A_{1}}{A_{1}+a_1}V_{1}\left(1-\dfrac{V_{1}}{K_{V,1}}\right)-d_{V,12}\dfrac{\alpha_{1}^n}{\alpha_{1}^n+c_V^n}\dfrac{1}{1+b_{1}A_{1}}V_{1}-\mu_{V,1}V_1,\\
     
\dfrac{dA_{1}}{dt}= r_AA_{1}\left(1-\dfrac{A_{1}}{(1-\alpha_{1})K_{A,1}}\right)-d_{A,12}\dfrac{\alpha_{1}^n}{\alpha_{1}^n+c_A^n}A_{1}-\mu_{A,1}A_1,\\
     
      \dfrac{dV_{2}}{dt}= r_V\dfrac{A_{2}}{A_{2}+a_2}V_{2}\left(1-\dfrac{V_{2}}{K_{V,2}}\right)+d_{V,12}\dfrac{\alpha_{1}^n}{\alpha_{1}^n+c_V^n}\dfrac{1}{1+b_{1}A_{1}}V_{1}-\mu_{V,2}V_2,\\

     \dfrac{dA_{2}}{dt}= r_AA_{2}\left(1-\dfrac{A_{2}}{K_{A,2}}\right)+d_{A,12}\dfrac{\alpha_{1}^n}{\alpha_{1}^n+c_A^n}A_{1}-\mu_{A,2}A_2.\\
\end{array}
    \right.
\end{equation}

In \Cref{theoreme-stabilite-1waydiffusion}, we will show the existence of a coexistence equilibrium $E_{V_1A_1V_2A_2} = \left(\bar{V}_1,\bar{A}_1,V_{2,+},A_{2,+}\right)$ with $\bar{V}_1>0$, $\bar{A}_1>0$, $V_{2,+}>0$, $A_{2,+}>0$. 
To investigate the existence and stability of $E_{V_1A_1V_2A_2}$ and other steady states of \eqref{2p-model-1way-diffusion}, we define the following thresholds, noting that some thresholds are functions of the equilibrium densities occuring in the coexistence equilibrium:
\begin{equation}\label{threshold1}
\begin{array}{c}
     d_{V,12}'=d_{V,12}\dfrac{\alpha_{1}^n}{\alpha_{1}^n+c_V^n}, \quad d_{A,12}'=d_{A,12}\dfrac{\alpha_{1}^n}{\alpha_{1}^n+c_A^n}, \quad K_{A,1}'=(1-\alpha_1)K_{A,1},\\     
   
   \mathcal{Q}_{0A,1}=\dfrac{r_A}{\mu_{A,1}+d_{A,12}'},\quad \bar{A}_1=K_{A,1}'\left(1-\dfrac{1}{\mathcal{Q}_{0A,1}}\right),\quad \Delta_A=(r_A-\mu_{A,2})^2+4d_{A,12}'\bar{A}_{1}\dfrac{r_A}{K_{A,2}},\\ 
   
   \bar{V}_1=K_{V,1}\left(1-\dfrac{1}{\mathcal{Q}_{0V,1}}\right),\quad
A_{2,+}=\dfrac{K_{A,2}}{2}\left(1-\dfrac{1}{\mathcal{R}_{0A,2}}+\dfrac{\sqrt{\Delta_A}}{r_A}\right),\\ 

\mathcal{Q}_{0V,1}=\dfrac{r_V\bar{A}_1}{(\bar{A}_1+a_1)\left(\mu_{V,1}+\dfrac{d_{V,12}'}{1+b_1\bar{A}_1}\right)}, \quad
 \mathcal{Q}_{0V,2}=\dfrac{r_V}{\mu_{V,2}}\dfrac{A_{2,+}}{A_{2,+}+a_2}, \quad \bar{V}_2=K_{V,2}\left(1-\dfrac{1}{\mathcal{Q}_{0V,2}}\right),\\
\Delta_V=\left(r_V\dfrac{A_{2,+}}{A_{2,+}+a_2}-\mu_{V,2}\right)^2+4d_{V,12}'\bar{V}_{1}\dfrac{r_V}{K_{V,2}}\dfrac{A_{2,+}}{A_{2,+}+a_2}\dfrac{1}{1+b_1\bar{A}_1},\\
V_{2,+}=\dfrac{K_{V,2}(A_{2,+}+a_2)}{2A_{2,+}}\left(1-\dfrac{\mu_{V,2}}{r_{V}}+\dfrac{\sqrt{\Delta_V}}{r_V}\right). 
\end{array}
\end{equation}



Using these definitions, we can summarise the existence and stability of steady states of \eqref{2p-model-1way-diffusion} as follows. 

\begin{theorem}\label{theoreme-stabilite-1waydiffusion}
    \begin{enumerate}
        \item The extinction equilibrium $E_{0000}=(0,0,0,0)$ is always an equilibrium and is LAS when $\mathcal{Q}_{0A,1}<1$ and $\mathcal{R}_{0A,2}<1$.
    \item The boundary equilibrium $E_{00V_2A_2}=\left(0,0,K_{V,2}\left(1-\dfrac{1}{\mathcal{R}_{0V,2}|_{\alpha_2=0}}\right),K_{A,2}\left(1-\dfrac{1}{\mathcal{R}_{0A,2}}\right)\right)$ exists if $\mathcal{R}_{0A,2}>1$ and $\mathcal{R}_{0V,2}|_{\alpha_2=0}>1$. It is LAS if $\mathcal{Q}_{0A,1}<1$.
    \item The boundary equilibrium $E_{000A_2}=\left(0,0,0,K_{A,2}\left(1-\dfrac{1}{\mathcal{R}_{0A,2}}\right)\right)$ exists if $\mathcal{R}_{0A,2}>1$. It is LAS if $\mathcal{Q}_{0A,1}<1$ and $\mathcal{R}_{0V,2}|_{\alpha_2=0}<1$.
    
    \item The boundary equilibrium $E_{0A_10A_2}=\left(0,\bar{A}_1,0,A_{2,+}\right)$ exists if $\mathcal{Q}_{0A,1}>1$. It is LAS if $\mathcal{Q}_{0V,1}<1$ and $\mathcal{R}_{0V,2}|_{\alpha_2=0}<1$. 
    \item The boundary equilibrium $E_{0A_1V_2A_2}=\left(0,\bar{A}_1,\bar{V}_2 
    ,A_{2,+}\right)$ exists if $\mathcal{Q}_{0A,1}>1$ and $\mathcal{Q}_{0V,2}>1$. It is LAS if $\mathcal{Q}_{0V,1}<1$.
    \item The coexistence equilibrium $
E_{V_1A_1V_2A_2} = \left(\bar{V}_1,\bar{A}_1,V_{2,+},A_{2,+}\right)
$ exists if $\mathcal{Q}_{0A,1}>1$ and $\mathcal{Q}_{0V,1}>1$. It is LAS in its entire existence region.
    \end{enumerate}
\end{theorem}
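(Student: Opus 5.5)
The plan is to exploit the triangular (cascade) structure of \eqref{2p-model-1way-diffusion}. The $A_1$ equation is autonomous. The $V_1$ equation depends only on $(V_1,A_1)$. The $A_2$ equation depends on $(A_1,A_2)$, and the $V_2$ equation on all variables. Ordering the state as $(A_1,V_1,A_2,V_2)$, the Jacobian at any equilibrium is therefore block lower-triangular with scalar diagonal entries. Its eigenvalues are simply the four diagonal partial derivatives $\partial_{A_1}f_{A_1}$, $\partial_{V_1}f_{V_1}$, $\partial_{A_2}f_{A_2}$, $\partial_{V_2}f_{V_2}$ evaluated at the equilibrium. Local asymptotic stability then reduces to checking the sign of four scalars at each equilibrium.

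\textbf{Existence.} First I solve for equilibria in cascade order.
\begin{itemize}
\item From the $A_1$ equation, $A_1=0$ or $A_1=\bar A_1=K_{A,1}'(1-1/\mathcal{Q}_{0A,1})$; the latter is positive iff $\mathcal{Q}_{0A,1}>1$.
\item Given $A_1$, the $V_1$ equation gives $V_1=0$ or $r_V\frac{A_1}{A_1+a_1}(1-V_1/K_{V,1})=\mu_{V,1}+\frac{d_{V,12}'}{1+b_1A_1}$. For $A_1=\bar A_1$ this yields $\bar V_1=K_{V,1}(1-1/\mathcal{Q}_{0V,1})$; for $A_1=0$ only $V_1=0$ is possible.
\item The $A_2$ equation is the quadratic $\frac{r_A}{K_{A,2}}A_2^2-(r_A-\mu_{A,2})A_2-d_{A,12}'A_1=0$.
  \begin{itemize}
  \item If $A_1=\bar A_1>0$, the product of the roots is negative, so there is exactly one positive root. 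This is $A_{2,+}$ with discriminant $\Delta_A$, and it exists with no condition on $\mathcal{R}_{0A,2}$.
  \item If $A_1=0$, the roots are $0$ and $K_{A,2}(1-1/\mathcal{R}_{0A,2})$.
  \end{itemize}
\item Similarly, the $V_2$ equation is a quadratic in $V_2$ with constant term $-d_{V,12}'\bar V_1/(1+b_1A_1)$.
  \begin{itemize}
  \item If $V_1>0$, this gives the unique positive root $V_{2,+}$.
  \item If $V_1=0$, the roots are $V_2=0$ and $V_2=K_{V,2}(1-1/\mathcal{Q}_{0V,2})$. With $A_2=K_{A,2}(1-1/\mathcal{R}_{0A,2})$ the latter becomes $\mathcal{R}_{0V,2}|_{\alpha_2=0}$ in the $E_{00V_2A_2}$ case.
  \end{itemize}
\end{itemize}
Enumerating these combinations produces exactly the six equilibria listed, with the stated existence conditions.

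\textbf{Stability.} I then compute the diagonal entries.
\begin{itemize}
\item $\partial_{A_1}f_{A_1}=r_A-\mu_{A,1}-d_{A,12}'-2r_AA_1/K_{A,1}'$. This equals $(\mu_{A,1}+d_{A,12}')(\mathcal{Q}_{0A,1}-1)$ at $A_1=0$ and $-(\mu_{A,1}+d_{A,12}')(\mathcal{Q}_{0A,1}-1)$ at $\bar A_1$.
\item $\partial_{V_1}f_{V_1}$ is $-(\mu_{V,1}+d_{V,12}')<0$ at $A_1=V_1=0$. At $(0,\bar A_1)$ it has the sign of $\mathcal{Q}_{0V,1}-1$, and at $\bar V_1$ it equals $-r_V\frac{\bar A_1}{\bar A_1+a_1}\bar V_1/K_{V,1}<0$.
\item $\partial_{A_2}f_{A_2}=r_A-\mu_{A,2}-2r_AA_2/K_{A,2}$. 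This is $\mu_{A,2}(\mathcal{R}_{0A,2}-1)$ at $A_2=0$ and its negative at the logistic root. At $A_{2,+}$ it equals $-\sqrt{\Delta_A}<0$, by the quadratic formula, since $A_{2,+}$ is the larger root.
\item $\partial_{V_2}f_{V_2}=r_V\frac{A_2}{A_2+a_2}(1-2V_2/K_{V,2})-\mu_{V,2}$. By the same argument it is $-\sqrt{\Delta_V}<0$ at $V_{2,+}$, $\pm\mu_{V,2}(\mathcal{Q}_{0V,2}-1)$ at the $V_1=0$ roots, and $-\mu_{V,2}$ at $A_2=0$.
\end{itemize}
Substituting these into each equilibrium gives the stated conditions. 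For instance, at $E_{0A_10A_2}$ the eigenvalues are negative iff $\mathcal{Q}_{0V,1}<1$ and the $V_2$-entry is negative. The coexistence state has all four entries negative on its existence region.

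\textbf{Main obstacle.} I expect the main obstacle to be bookkeeping rather than analysis. The conditions on the $V_2$ direction are stated as $\mathcal{R}_{0V,2}|_{\alpha_2=0}<1$ in item 4, whereas the natural quantity at $E_{0A_10A_2}$ is $\mathcal{Q}_{0V,2}$, evaluated at $A_{2,+}$. I will have to check carefully which threshold actually governs the sign. Since $A_{2,+}\ge K_{A,2}(1-1/\mathcal{R}_{0A,2})$, we have $\mathcal{Q}_{0V,2}\ge\mathcal{R}_{0V,2}|_{\alpha_2=0}$. So the correct sharp condition is $\mathcal{Q}_{0V,2}<1$, which implies the stated one; I would verify or adjust the statement accordingly. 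Similar care is needed at $E_{0000}$, where the $V$-entries are automatically negative.
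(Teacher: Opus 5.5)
Your plan is the paper's own proof: it also computes the Jacobian at each of the six equilibria and reads the eigenvalues off the diagonal via the cascade structure, though it keeps the ordering $(V_1,A_1,V_2,A_2)$, in which the matrix is block lower-triangular with upper-triangular $2\times 2$ blocks, rather than reordering. Your suspicion about item 4 is correct, and you should commit to it rather than hedge: the paper writes the $V_2$-diagonal entry of $J_{E_{0A_10A_2}}$ as $\mu_{V,2}(\mathcal{R}_{0V,2}|_{\alpha_2=0}-1)$, but at that equilibrium $A_2=A_{2,+}$, and $A_{2,+}>K_{A,2}(1-1/\mathcal{R}_{0A,2})$ whenever $d_{A,12}'\bar A_1>0$. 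The entry is therefore $\mu_{V,2}(\mathcal{Q}_{0V,2}-1)$ with $\mathcal{Q}_{0V,2}>\mathcal{R}_{0V,2}|_{\alpha_2=0}$, so item 4 as stated fails on the nonempty window $\mathcal{R}_{0V,2}|_{\alpha_2=0}<1<\mathcal{Q}_{0V,2}$ (where $E_{0A_1V_2A_2}$ of item 5 already exists, consistent with a transcritical exchange at $\mathcal{Q}_{0V,2}=1$), and the correct condition is $\mathcal{Q}_{0V,1}<1$ and $\mathcal{Q}_{0V,2}<1$.
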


\begin{proof}
We first establish the existence of equilibria and then deal with their stability results. See Appendix \ref{appendixA} for details.    
\end{proof}

\subsection{Biological interpretation of stability results for the one-way migration}
We first note the relation of these results to results for the single-patch (no migration) model presented in \Cref{theorem-global-stability}. One-way migration from patch 1 to patch 2 enables equilibria in which the population in patch 1 goes extinct, and densities and existence conditions of these equilibria are identical to those derived for the single-patch model with $\alpha_2=0$ (modulo patch label). Stability conditions are also the same, except the addition of $\mathcal{Q}_{0A,1}<1$ as a further constraint. The constant $\mathcal{Q}_{0A,1}$ represents a resilience index of animals of patch 1 experiencing anthropization because it relates the animal's maximum growth rate to the rate animals leave (either through death or migration) the patch. The condition $\mathcal{Q}_{0A,1}<1$ thus means that extinction in patch 1 occurs if the animals' removal rate from the patch is higher than its maximum growth rate. The threshold $\mathcal{Q}_{0A,1}$ is a decreasing function of the anthropization level $\alpha_1$ but more generally, it is a decreasing function of the `displacement' rate out of the patch 1, $d_{A,12}'=d_{A,12}\dfrac{\alpha_{1}^n}{\alpha_{1}^n+c_A^n}$\blank{, which} increases with $\alpha_1$. Therefore, when the parameter $\alpha_1$ increases, animals in patch 1 become less resilient and  their displacement rate out of patch 1 increases. When animals become less present in patch 1, vectors feeding become scarce and then their displacement out of patch 1 will also increase. 

The results also show that prevention of vector extinction in patch 2 requires $\mathcal{R}_{0V,2}|_{\alpha_2=0}>1$. Similar to the interpretation of results for the single patch model this highlights that the half-saturation parameter $a_2$ of the vector reproduction functional response to animal availability \blank{needs to be sufficiently small} to prevent extinction of the vector population in patch 2. That is, acknowledging that vector `intrinsic' offspring number $\dfrac{r_V}{\mu_{V,2}}$ in patch 2 is greater than one, $a_2$ must satisfy 
    
    $$0<a_2<K_{A,2}\left(1-\dfrac{1}{\mathcal{R}_{0A,2}}\right)\left(\dfrac{r_V}{\mu_{V,2}}-1\right):=a_2^{\max}.$$ 
    Recall that, in our setting, patch 2 is free of anthropization, that is $\alpha_2=0.$
If $a_2$ is large (i.e. $a_2>a_2^{\max}$), it indicates that the vectors in patch 2 are less efficient at finding or utilizing their host. This leads to the animal-mediated vector offspring number $\mathcal{R}_{0V,2}|_{\alpha_2=0}$ being larger than one and causes extinction of the vectors in patch 2. 

Finally, we note that $\mathcal{Q}_{0V,1}>1$ is required to prevent vector extinction in patch 1. Thus, $\mathcal{Q}_{0V,1}>1$ that can be understood as the `sustainability' threshold of vectors in patch 1 subject to anthropization. Interestingly, it is a non-monotonic function of the anthropization level $\alpha_1$ of patch 1. This reveals that the anthropization level variations  will non-monotonically impact variations of $\mathcal{Q}_{0V,1}$ and then non-linearly shape the persistence or the extinction of vectors in patch 1. \

\subsection{The general case of two way migration}\label{sec: full model}
We now turn to the general system (\ref{2p-model}) where migration from patch 1 to patch 2 and from patch 2 to patch 1 take place. A linear stability analysis to determine the asymptotic behaviour of the model is possible, but due to the order of the system, the expressions involved are of significant algebraic complexity. This makes interpretation of stability conditions impossible from a biological viewpoint. We therefore refrain from stating full algebraic expressions in the theorem below (but see the appendix for full details), but note that all undefined constants are functions of the entries of the model's Jacobian, evaluated at the respective steady state.

\begin{theorem}\label{theoreme-ekilibre-LAS-2d-migration}

\begin{enumerate}
  %
    \item The extinction equilibrium $E_{0000}=(0,0,0,0)$ exists for all parameter values and is LAS if $r_A<r_{A,\min}$.
    \item There exists at least one vector extinction equilibrium $E_{0A_10A_2}=(0,A_{1,+},0,A_{2,+})$ if $\mathcal{Q}_{0A,1}\geq1$ or $\mathcal{Q}_{0A,2}\geq1$. It is LAS if $B_1<0$ and $B_2>0$. 

    \item There exists at least one positive coexistence equilibrium $E_{V_1A_1V_2A_2}=(V_{1,+},A_{1,+},V_{2,+},A_{2,+})$ if ($\mathcal{Q}_{0A,1}\geq1$ or $\mathcal{Q}_{0A,2}\geq1$) and ($\mathcal{S}_{0A,1}\geq1$ or $\mathcal{S}_{0A,2}\geq1).$ 
It is LAS whenever $C_1>0$ and $C_2>0$.
\end{enumerate}    

\end{theorem}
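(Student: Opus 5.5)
The proof will use the block-triangular structure of the Jacobian of \eqref{2p-model}. The $(A_1,A_2)$-subsystem does not depend on $(V_1,V_2)$. At any equilibrium of the form $(0,A_1,0,A_2)$, the vector equations are linear in $(V_1,V_2)$. With the state ordered as $(V_1,V_2,A_1,A_2)$, the Jacobian is therefore block upper triangular. Its spectrum is the union of the spectra of the $2\times2$ animal block $J_A$ and the $2\times2$ vector block $J_V$, each evaluated at the steady state. By the trace–determinant criterion, each $2\times2$ block is Hurwitz iff its trace is negative and its determinant positive. The constants $r_{A,\min}$, $B_1,B_2$, $C_1,C_2$ in the statement are exactly these traces and determinants, or equivalent combinations of them.

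\emph{Item 1.} At $E_{0000}$ the animal block is
\[
J_A=\begin{pmatrix} r_A-\mu_{A,1}-d_{A,12}' & d_{A,21}'\\ d_{A,12}' & r_A-\mu_{A,2}-d_{A,21}'\end{pmatrix},
\]
a Metzler matrix. Its stability modulus is increasing in $r_A$, so there is a unique $r_{A,\min}$ at which $\det J_A=0$ with negative trace. For $r_A<r_{A,\min}$, $J_A$ is Hurwitz. The vector block at $A=0$ has diagonal entries $-\mu_{V,i}-d'_{V,ij}$, nonnegative off-diagonal entries and column sums $-\mu_{V,i}\le 0$. It is therefore Hurwitz, with the degenerate case $\mu_{V,i}=0$ excluded.

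\emph{Item 2.} Existence comes from the animal subsystem. Solve the $A_1$-equation for $A_2$:
\[
A_2=\frac{1}{d_{A,21}'}\Bigl[-r_AA_1\Bigl(1-\frac{A_1}{K'_{A,1}}\Bigr)+(d'_{A,12}+\mu_{A,1})A_1\Bigr].
\]
Substituting into the $A_2$-equation gives a quartic $P(A_1)$ with $P(0)=0$. Dividing by $A_1$ leaves a cubic. I will locate a positive root with $A_2>0$ by the intermediate value theorem. The sign near $0$ is controlled by $\mathcal{Q}_{0A,1}-1$ (or symmetrically $\mathcal{Q}_{0A,2}-1$), and the sign at $A_1=K'_{A,1}$, or as $A_1\to\infty$, is fixed by the negative quadratic terms. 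An alternative is to note that the animal subsystem is cooperative with logistic self-limitation. Since it is dissipative, the positive orthant contains a compact forward-invariant box, and when the origin is unstable (guaranteed by some $\mathcal{Q}_{0A,i}\ge1$) a positive equilibrium exists. Stability then reduces to the vector block, which is evaluated at $(A_{1,+},A_{2,+})$, together with the animal block at the same point. Because $J_A$ is Metzler with strictly diagonally dominant negative diagonal at a positive equilibrium (use the equilibrium identities), the animal block is automatically Hurwitz. So only $J_V$ matters, and I set $B_1=\operatorname{tr}J_V<0$ and $B_2=\det J_V>0$.

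\emph{Item 3.} For existence of the coexistence equilibrium I will fix $(A_{1,+},A_{2,+})$ from item 2 and treat the vector equations as a cooperative logistic system in $(V_1,V_2)$ with constant coefficients. The same two-step argument applies. First solve the $V_1$-equation for $V_2$, which is quadratic in $V_1$. Then obtain a positive root by a sign change. The sign near zero is governed by thresholds $\mathcal{S}_{0A,i}$, which I define as the vector analogues of $\mathcal{Q}_{0A,i}$: the ratio of $r_V A_{i,+}/(A_{i,+}+a_i)$ to $\mu_{V,i}+d'_{V,ij}/(1+b_iA_{i,+})$. Logistic saturation forces the sign at large $V$.

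For stability at the coexistence equilibrium, the Jacobian is again block triangular in the ordering $(V,A)$, since $\partial \dot A/\partial V=0$. The animal block is Hurwitz by the argument of item 2, and the $V$-block gives $C_1=-\operatorname{tr}$ and $C_2=\det$. This is not a full $4\times4$ Routh–Hurwitz computation as the appendix phrasing might suggest, which simplifies the stability part considerably. The main obstacle is the existence part: showing that the quartic or cubic has a root giving both coordinates positive under only the disjunctive hypothesis ($\mathcal{Q}_{0A,1}\ge1$ or $\mathcal{Q}_{0A,2}\ge1$). For this I would use the cooperative-system, instability-of-origin argument rather than explicit root-sign analysis.
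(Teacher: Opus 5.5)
Your stability treatment of items 1 and 2 matches the paper's. Its factorizations $p=p_1p_2$ at $E_{0000}$ and $p=p_3p_4$ at $E_{0A_10A_2}$ are exactly your block splitting, $r_{A,\min}$ is the smaller root of $\det J_A(0)$ viewed as a quadratic in $r_A$, and $B_1,B_2$ are literally the trace and determinant of the vector block.

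The genuine differences lie in the existence arguments and in item 3.

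For existence, the paper eliminates $A_2$ to get an explicit cubic $h(A_1)$. It then applies the intermediate value theorem on $\bigl(K'_{A,1}(1-1/\mathcal{Q}_{0A,1}),\infty\bigr)$, using that at the left endpoint $A_2=0$ and $h=-d'_{A,12}d'_{A,21}<0$. That evaluation point is what your root-sign sketch lacks. At $A_1=0$ the reduced cubic equals a nonzero multiple of $\det J_A(0)$, whose sign is not fixed by $\mathcal{Q}_{0A,1}-1$ alone.

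Your preferred cooperative-systems route works. To make it precise, take $\varepsilon v$, with $v\gg0$ the Perron vector of $J_A(0)$, as a strict sub-equilibrium; its increasing trajectory converges inside the invariant box. This route avoids the algebra and applies verbatim to the $V$-subsystem and to the $n$-patch model. Combined with strict sublinearity of the logistic terms, it even yields uniqueness of the positive equilibria, which is sharper than the paper's Descartes-rule counts. Both routes tacitly need positive migration: the paper divides by $d'_{A,12},d'_{A,21}$ throughout, and you need $d'_{A,12}d'_{A,21}>0$ for the origin to be unstable when some $\mathcal{Q}_{0A,i}=1$.

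For item 3, the paper runs the full quartic Routh--Hurwitz test: $C_3=-\operatorname{tr}J$, $C_4=\det J$, and $C_1,C_2$ are the remaining Routh-array entries. You instead keep the block-triangular structure. The paper's own $C_4=(J_{1,1}J_{3,3}-J_{1,3}J_{3,1})(J_{2,2}J_{4,4}-J_{2,4}J_{4,2})$ confirms that structure. Your $C_1,C_2$ are not the paper's expressions, but the two pairs of conditions are equivalent, since each characterizes the Hurwitz property of $J$.

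Two points need tightening.

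First, the animal block need not be diagonally dominant in the unweighted sense. What the equilibrium identities actually give is $J_{AA}\,(A_{1,+},A_{2,+})^{\top}=-r_A\bigl(A_{1,+}^2/K'_{A,1},\,A_{2,+}^2/K'_{A,2}\bigr)^{\top}\ll0$. This weighted dominance suffices for a Metzler matrix to be Hurwitz.

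Second, the identical computation for the vector block gives $J_{VV}\,(V_{1,+},V_{2,+})^{\top}=-r_V\bigl(\tfrac{A_{1,+}}{A_{1,+}+a_1}\tfrac{V_{1,+}^2}{K_{V,1}},\,\tfrac{A_{2,+}}{A_{2,+}+a_2}\tfrac{V_{2,+}^2}{K_{V,2}}\bigr)^{\top}\ll0$. So every positive coexistence equilibrium is LAS outright, and $C_1>0$, $C_2>0$ (in either form) hold automatically.
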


\begin{proof}
See Appendix \ref{AppendixC}.
\end{proof}

\subsection{Biological interpretation of stability results for two-way migration} \label{sec: numcont}

\Cref{theoreme-ekilibre-LAS-2d-migration} provides exact expressions of steady states and exact conditions on their stability. However, the algebraic complexity of these quantities (see appendix) makes deducing biological interpretations impracticable in most cases. Moreover, it is not guaranteed that the analysis captured all steady states of the system. We therefore employed numerical continuation to construct bifurcation diagrams for the model and provide better insight into how changes in model parameters affect solution dynamics. To do so, we implemented the model system into AUTO-07p \cite{AUTO}, a popular continuation software for ODEs. Given that the main biological questions addressed in this paper revolve around the impact of anthropization on the vector-animal dynamics, we here report results from continuations in which one of the anthropization constants ($0\le \alpha_1,\alpha_2 <1$) was the main bifurcation parameter. This required arbitrary fixing of all other model parameters. For this, we assumed that there is an order of magnitude difference between vector and animal growth and death, vectors have a larger carrying capacity (in the absence of anthropization) than animals, and animals are prompted to migrate at lower anthropization levels than vectors. For all patch-specific parameters we assumed slight differences between patches (e.g., slightly higher mortality in patch 2).  These qualitative assumptions resulted in choices of $ r_V = 3.0 $, $ r_A = 0.3 $, $ a_1 = 0.5 $, $ a_2 = 0.4 $, $ K_{V,1} = 1.0 $, $ K_{V,2} = 1.1 $, $ K_{A,1} = 0.5 $, $ K_{A,2} = 0.4 $, $ d_{V,12} = d_{V,21} = d_{A,12} = d_{A,21} = 0.2 $, $ c_V = 0.7 $, $ c_A = 0.3 $, $ b_1 = 1.0 $, $ b_2 = 1.1 $, $ \mu_{V,1} = 1.0 $, $ \mu_{V,2} = 1.1 $, $ \mu_{A,1} = 0.1 $, $ \mu_{A,2} = 0.11 $. We also performed continuations in other parameters and briefly report on them in \Cref{sec: appendix bif analysis other paras}. 

The construction of the numerical bifurcation diagrams started from the case of no migration (see \Cref{sec: no migration}) for which we were able to determine closed-form expressions of steady states (\Cref{theorem-global-stability}). For each steady state, we initialised the continuation with $d_{V,12}=d_{V,21}=d_{A,12}=d_{A,21}=0$. We then continued the system in the migration parameters to the desired values (either separately or simultaneously, depending on the target values). We subsequently switched the bifurcation parameter to the intended main bifurcation parameter and continued in this parameter. During this continuation, we recorded branching points. We further restarted the continuation at each branching point after branch switching to check whether these branching points lead to steady states not detected in our previous analysis. After performing this procedure for all steady states, we visualised the steady state branches in a bifurcation diagram after deleting any duplicates \blank{arose through} branch switching.

We used the bifurcation diagrams to assess the impact of anthropization on the solution dynamics. We note that due to the symmetry in the system, there is no qualitative difference between the impact of anthropization in patch 1 ($\alpha_1$) and the impact of anthropization in patch 2 ($\alpha_2$). We thus only report outcomes of our continuation in $\alpha_1$. To simplify the analysis, we choose parameters such that $r_A$ is sufficiently large such that changes in $\alpha_1$ do not cause a transition to extinction (\Cref{theoreme-ekilibre-LAS-2d-migration}).

Starting from parameter values that lead to coexistence of all four model densities (see \Cref{fig: contfigure}G for an example solution) in the state of no anthropization in patch 1 ($\alpha_1=0$), increases in $\alpha_1$ can lead to one of three behaviours, depending on the anthropization level of patch 2, $\alpha_2$. For low levels of anthropization in patch 2, increases in anthropization in patch 1 do not lead to a stability change (\Cref{fig: contfigure}A); for higher levels of anthropization in patch 2, increases in anthropization in patch 1 leads to a transition from coexistence of all four model densities to a state in which only the animals persist and the vectors are extinct (\Cref{fig: contfigure}B,D, and \Cref{fig: contfigure}F for an example solution). Note that the transitions between coexistence and vector extinction occur via a single bifurcation. That is, at a critical value of $\alpha_1$, the coexistence state exchanges stability with the vector extinction steady state. Simultaneous vector extinction is a natural consequence of the migration terms in the model, provided migration constants are non-zero; if vectors are present in one of the patches, that patch acts as a source term for the other patch.  There exists also a region of $\alpha_2$ values for which an increase of $\alpha_1$ from zero towards unity leads to a total of three bifurcations: an initial increase of $\alpha_1$ causes a transition from coexistence to vector extinction as described for the previous case. However, a further increase of $\alpha_1$ leads to a bifurcation back to coexistence of all four model densities, before a further increase of $\alpha_1$ causes a final transition to vector extinction (\Cref{fig: contfigure}C). We further observed that for some parameter values, several coexistence and/or several vector extinction equilibria exist. However, we never observed multistability of equilibria. 

The changes to the $\alpha_1$ bifurcation diagrams caused by variations in $\alpha_2$ prompted us to further investigate how bifurcations depend on the two anthropization parameters. For this, we performed a two-parameter continuation of the bifurcation causing an exchange of stability between coexistence and vector extinction equilibria. The resulting stability boundary in the $\alpha_1$-$\alpha_2$ plane is shown in \Cref{fig: contfigure}E and highlights how changes in $\alpha_2$ affect the number of bifurcations observed in a $\alpha_1$ bifurcation diagram. It highlights that changing the value of $\alpha_2$ affects the number of intersections between the bifurcation curve and horizontal transects corresponding to the different cases described above. It shows that the regime leading to multiple stability exchanges in the $\alpha_1$ diagrams is bounded above and below (with respect to $\alpha_2$) by folds in the bifurcation curve (red dots in \Cref{fig: contfigure}). 

We further investigated how changes to other model parameters affect the size of the parameter region (distance between the $\alpha_2$ values at which folds occur) in which multiple bifurcations occur as $\alpha_1$ is changed. To do this, we repeated the two-parameter continuation of the stability boundary between coexistence and vector extinction steady state for other parameters and recorded the fold location. Firstly, this data (visualised in \Cref{fig: fold locs}) revealed that there are parameter values for which no folds occur. Folds only occurred if parameters $r_V$, $r_A$, $a_2$, and $K_{A,2}$ were in respective intervals of finite size, and if parameters $a_1$, $\mu{V,1}$, and $n$ were sufficiently large and parameters $K_{A,1}$$, \mu_{V,2}$, $\mu_{A,1}$ and $\mu_{A,2}$ were sufficiently small. Changes to $K_{V,1}$, $K_{V,2}$, $c_V$, $c_A$, $b_1$, $b_2$, and the migration parameters (both when changing only animal or vector migration parameters and when changing all migration parameters simultaneously) had no impact on the existence of folds in the bifurcation curve.

\begin{figure}
    \centering
    \includegraphics[width=0.85\linewidth]{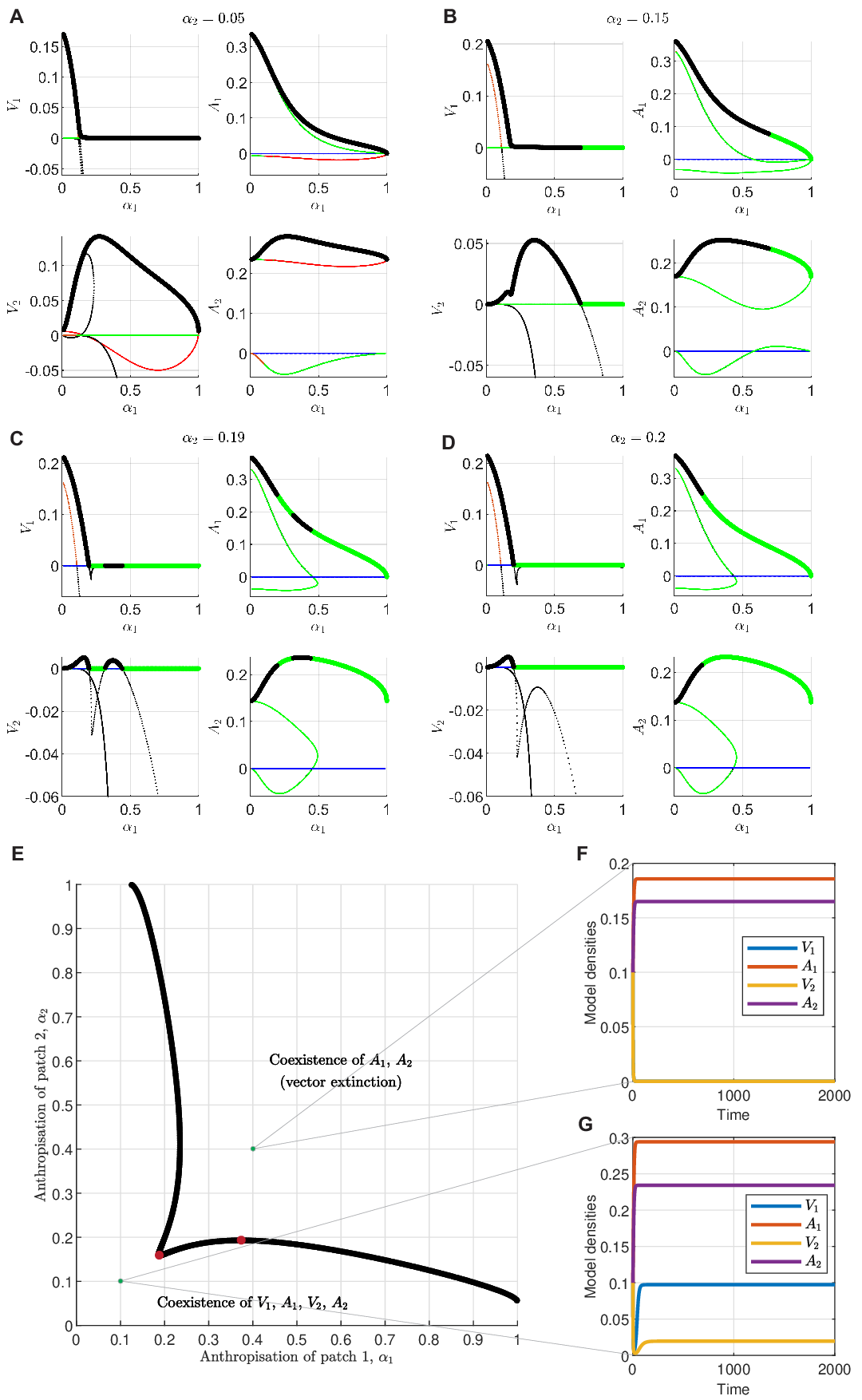}
    \caption{See caption overleaf.}
    \label{fig: contfigure}
\end{figure}
\addtocounter{figure}{-1}
\begin{figure} 
  \caption{\textbf{A-D:} Bifurcation diagrams with respect to $\alpha_1$. The value of $\alpha_2$ varies between plots as indicated by the figure titles. Thick curves represent stable solutions; thin curves represent unstable solutions. The colour of the curve distinguish between different equilibria: black curves are coexistence equilibria; green curves are vector extinction equilibria; blue curves are extinction equilibria; red curves are equilibria in which one of the vector populations goes extinct (never biologically relevant). \textbf{E:} The $\alpha_1$-$\alpha_2$ parameter plane is split according to the stability of equilibria. Red dots indicate the location of folds (with respect to $\alpha_2$ in the bifurcation curve. \textbf{F,G:} Example solution plots corresponding to the parameter values indicated by green dots in \textbf{E}. Note that in \textbf{F}, the solution curves of $V_1$ and $V_2$ overlap.  Across all figures, parameter values are $a_1 = 0.5$, $a_2=0.4$, $b_1=1$, $b_2=1.1$, $c_A=0.3$, $c_V=0.7$, $d_{A,12}=d_{A,21}=d_{V,12}=d_{V,21} = 0.2$, $K_{A,1} = 0.5$, $K_{A,2}=0.4$, $K_{V,1}=1$, $K_{V,2} = 1.1$, $\mu_{A,1} = 0.1$, $\mu_{A,2}=0.11$, $\mu_{V,1}=1$, $\mu_{V,2} = 1.1$, $r_A = 0.3$, $r_V = 3$, $n=2$.}
\end{figure}

\begin{figure}
    \centering
    \includegraphics[width=\linewidth]{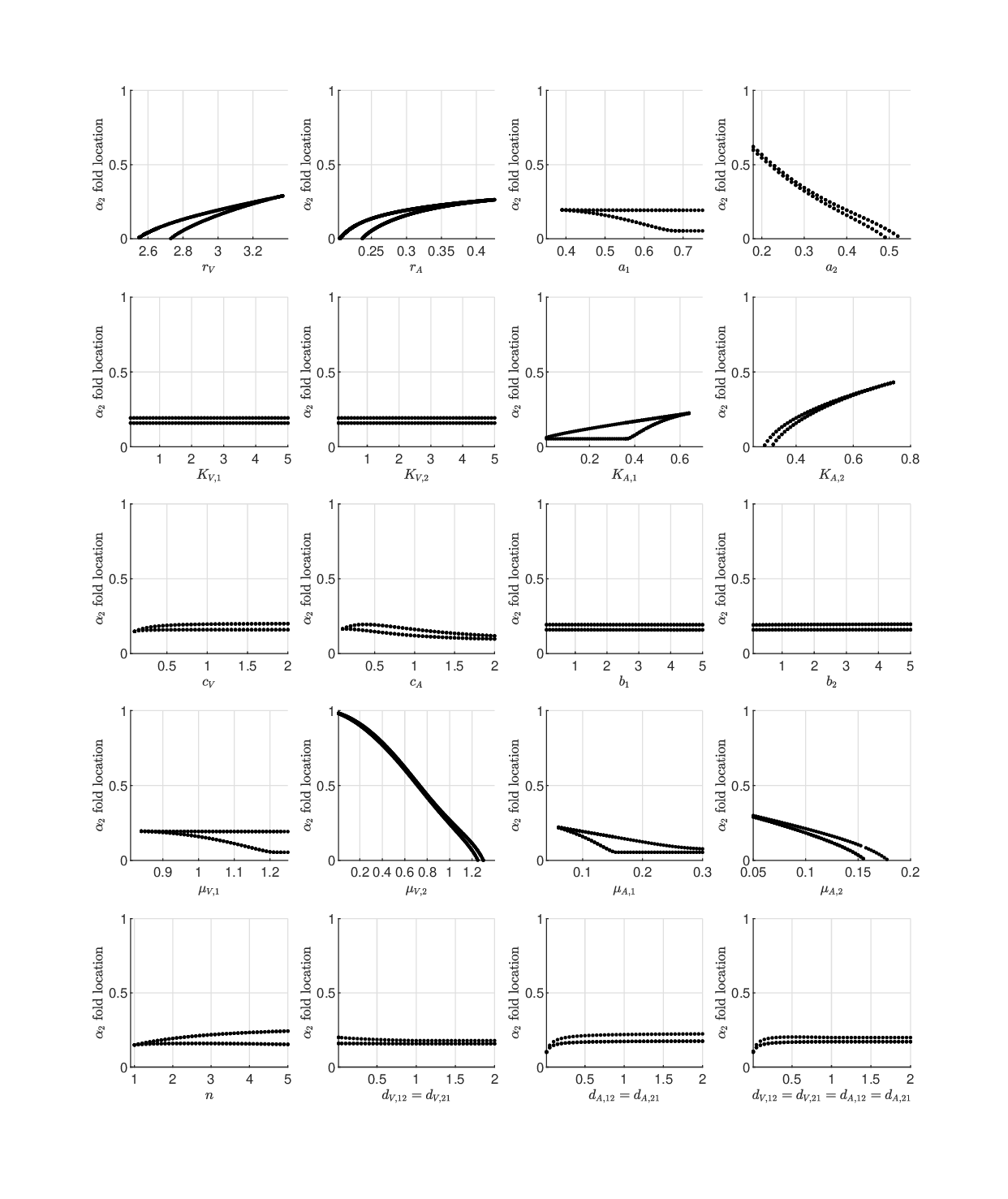}
\caption{Fold location data. Each panel shows how varying a single parameter affects the location of folds (with respect to $\alpha_2$) in the bifurcation curve separating stability regions of the coexistence steady state and the vector extinction steady state.}\label{fig: fold locs}
\end{figure}

\section{Discussion} \label{sec: discussion}



In this paper, we introduced a novel model describing the population dynamics of a disease vector and its animal host in a network of two patches, with migration between both patches being affected by the anthropization of the patches. To the best of our knowledge, this is the first model that accounts for both vector-animal dynamics and anthropization, and thus presents a foundation for revealing how human land use affects population dynamics of disease vectors and their hosts. Through combining a linear stability analysis with numerical continuation, we investigated how increased anthropization associated with increased \blank{land use by humans} \cite{Ellis2013,KathleenLyons2016} affects the population dynamics. We observed complex, i.e., non-monotone dependencies of population levels on anthropization levels, which we discuss in more detail below. These complex relationships observed in our toy model suggest that more information on ecosystems is needed in order to predict how increased land use affects disease vectors, their animal hosts and consequently the eventual disease dynamics (noting that we did not model the latter in this study).

The main result of this paper is the non-monotonic dependency of steady state population densities in one patch on the anthropization level of the connected patch, which can go as far as causing repeated bifurcations back and forth between a vector-extinction and a coexistence equilibrium. That is, an increase in anthropization in patch 1 initially leads to increases in both vector and animal densities in patch 2, before the trend reverses as anthropization in patch 1 becomes larger, eventually resulting in vector extinction on both patches (\Cref{fig: contfigure}). In our model, we assumed that increased anthropization in a patch is detrimental to animals (reduced carrying capacity) and encourages migration away from the patch for both the animals and vectors. This makes the model applicable to species within these constraints, for example the constituents of flea-mammal systems associated with the transmission of human diseases  such as murine typhus \cite{Friggens2010}. Our model is not intended to for species that favour habitats with human activity, such as the dengue, chikungunya, and Zika virus vectors \textit{Aedes aegypti} and \textit{Aedes albopictus} \cite{Kolimenakis2021}. As a result, increases in anthropization in patch 1 leads to increases in the steady state densities in patch 2, provided steady state densities in patch 1 are large. Only increases of anthropization in patch 1 to levels that reduce patch 1 steady state densities to zero or near-zero levels reverse these trends in patch 2 due to decreased net migration into patch 2. 

We further observed that increases in anthropization in patch 1 lead to extinction of vectors in both patches, provided that anthropization in patch 2 is sufficiently large (\Cref{fig: contfigure}). If anthropization in patch 2 is small, then vectors remain at a non-zero steady state density in both patches independent of the anthropization level in patch 1. This is due to our choice of model terms that do not allow extinction of the vector population in a single patch only; for non-zero levels of anthropization, migration between patches always takes place and can thus maintain a vector population for parameter values that would lead to vector extinction in the absence of migration (\Cref{theorem-global-stability}). Related to this, our numerical continuation results provide additional insight. They show that if anthropization in patch 2 is low, there is a significant qualitative shift in steady state densities for increasing anthropization in patch 1, despite the lack of a bifurcation. For low levels of anthropization in patch 1, the vector population in patch 1 is significantly larger than zero, while for larger levels of anthropization in patch 1, the vector population in patch 1 attains near (but non) zero steady state densities. Given the lack of a bifurcation between these two regimes, it is impossible to accurately quantify the threshold causing this transition. However,  this highlights that care is required when interpreting the meaning of the stable coexistence equilibrium. 

For some regions of parameter space, we further identified repeated bifurcations between states of coexistence of vectors and animals and states of vector extinction as the anthropization in patch 1 increases. Our numerical bifurcation analysis of both stable and unstable states provides more insights into these dynamics. Repeated bifurcations can occur, when the gradient of the vector components of the coexistence equilibrium repeatedly changes sign as anthropization in patch 1 increases, leading to the existence of two local maxima on the coexistence branches. While we observed the first local maxima to be always positive (with solutions close to it being stable), the signs of the second local maximum and the local minimum between the local maxima depend on the anthropization level in patch 1 (c.f., \Cref{fig: contfigure} B, C and D). Cases in which both local maxima are positive, but the local minimum is negative leads to repeated bifurcations along the anthropization axis via transcritical bifurcations that occur when the vector densities change sign (\Cref{fig: contfigure} C). This highlights that detailled knowledge of a system is needed to predict how vector populations will respond to changes in anthropization. 

This study presents an initial foray into the question of how anthropization-induced migration affects population dynamics of disease vectors and their hosts. The model used is phenomenological and thus presents many avenues for further study. We restricted our analysis to two patches only. More generally, the framework could be posed on $n \in \N$ patches, with dynamics given by 
\begin{align*}
    \frac{dV_i}{d t} &=  r_V\dfrac{A_{i}}{A_{i}+a_i}V_{i}\left(1-\dfrac{V_{i}}{K_{V,i}}\right)-\dfrac{\alpha_{i}^n}{\alpha_{i}^n+c_V^n}\dfrac{1}{1+b_{i}A_{i}}V_{i} \sum_{k=1}^n d_{V,ik}+ \sum_{k=1}^n d_{V,ki}\dfrac{\alpha_{k}^n}{\alpha_{k}^n+c_V^n}\dfrac{1}{1+b_{k}A_{k}}V_{k} \\ &-\mu_{V,i}V_i, \\   
    \frac{d A_i}{d t} &= r_AA_{i}\left(1-\dfrac{A_{i}}{(1-\alpha_{i})K_{A,i}}\right)-\dfrac{\alpha_{i}^n}{\alpha_{i}^n +c_A^n}A_{i} \sum_{k=1}^n d_{A,ik}+ \sum_{k=1}^n d_{A,ki}\dfrac{\alpha_{k}^n}{\alpha_{k}^n+c_A^n}A_{k}-\mu_{A,i}A_i.
\end{align*}
In this model, connections between the patches would be determined by the migration parameters $d_{V,ik}$ and $d_{A,ik}$ (noting that $d_{V,ii} = d_{V,ii} = 0$) and could provide answers on more complex questions, for example on how the impact of anthropization in one patch propagates through a network of patches depending on the connectivity of the underlying graph. 
Analysis of such a more general model would be more challenging due to the increased number of dimensions. However, numerical bifurcation analysis in particular is expected to be a powerful tool to understand the dynamics in such a system. 

Another line of potential future work is the coupling of the current model to dynamics of a zoonotic vector-borne infectious disease affecting the animal species, humans, or both. It is well known from empirical works that landscape dynamics that influence the movement and distribution of disease vectors and their animal hosts reshape the dynamics of vector-borne diseases \cite{Crowder2013,Ferraguti2021,Fletcher2018,Lambin2010,Shaw2024,Tracey2014}, but modelling studies to further elucidate these dynamics are currently lacking.


\paragraph*{Data availability:}
Computational code to perform the numerical continuation and simulation has been deposited in a Github repository and archived using Zenodo \cite{Coderepo_vector_animal}.

\paragraph*{Acknowledgments:}
IVY-D and PC acknowledge
the support of the AFRICAM project funded by the French Agency for Development (AFD).
IVY-D and LE were supported by a Royal Society International Exchanges Award (IES\textbackslash R1\textbackslash 241236)

\paragraph*{Author contributions (using CRediT):}
O.W. Happi-Tchakount\'e: Methodology, Formal analysis, Writing - Original draft, Writing - Review \& Editing\\ 
I.V. Yatat-Djeumen: Methodology, Formal analysis, Writing - Original draft, Writing - Review \& Editing, Supervision\\
L. Eigentler: Software, Validation, Formal analysis, Investigation, Data curation, Writing - Original draft, Writing - Review \& Editing, Visualization \\
P. Couteron: Methodology, Funding acquisition, Writing - Original draft, Writing - Review \& Editing\\
 
\printbibliography

@article{Tewa2013,
title = {Predator–Prey model with Holling response function of type II and SIS infectious disease},
journal = {Applied Mathematical Modelling},
volume = {37},
number = {7},
pages = {4825-4841},
year = {2013},
issn = {0307-904X},
doi = {https://doi.org/10.1016/j.apm.2012.10.003},
url = {https://www.sciencedirect.com/science/article/pii/S0307904X12006087},
author = {Jean Jules Tewa and Valaire Yatat-Djeumen and Samuel Bowong}
}

@Article{lord_host-Seeking_2017,
	Title = {Host-Seeking Efficiency Can Explain Population Dynamics of the Tsetse Fly {Glossina} Morsitans Morsitans in Response to Host Density Decline},
	Volume = {11},
	Issn = {1935-2735},
	Url = {Https://dx.plos.org/10.1371/journal.pntd.0005730},
	Doi = {10.1371/journal.pntd.0005730},
	Language = {En},
	Number = {7},
	Urldate = {2025-07-01},
	Journal = {Plos Neglected Tropical Diseases},
	Author = {Lord, Jennifer S. and Mthombothi, Zinhle and Lagat, Vitalis K. and Atuhaire, Fatumah and Hargrove, John W.},
	Editor = {Valenzuela, Jesus G.},
	Month = Jul,
	Year = {2017},
	Pages = {E0005730},
}

@mastersthesis{Likeufack2025,
author  = {Likeufack, W. C.},
title   = {modélisation mathématique de la dynamique des mouches tsé-tsé et l’optimisation de la Technique de l’Insecte Stérile (TIS)},
school  = {University of Yaound\'e 1},
year    = {2025},
month   = {Octobre},
type    = {Master's thesis},
}

@article{dumont2025humanwildlifeinteractionstropicalforest,
author = {Dumont, Yves and Hétier, Marc and Yatat-Djeumen, Valaire},
title = {Human–Wildlife Interactions in a Tropical Forest Context: Modeling, Analysis, and Simulations},
journal = {Mathematical Methods in the Applied Sciences},
volume = {n/a},
number = {n/a},
pages = {},
doi = {https://doi.org/10.1002/mma.70777},
url = {https://onlinelibrary.wiley.com/doi/abs/10.1002/mma.70777},
eprint = {https://onlinelibrary.wiley.com/doi/pdf/10.1002/mma.70777}
}

@Misc{Coderepo_vector_animal,
  author    = {Lukas Eigentler},
  date      = {2025},
  title     = {Code repository for O.W. Happi-Tchakounte, I.V. Yatat-Djeumen, L. Eigentler, P. Couteron. Analysis of a two patch model for disease vector-animal dynamics with anthropization-driven migration.},
  doi       = {10.5281/ZENODO.17882418},
  copyright = {GNU General Public License v3.0 only},
  publisher = {Zenodo},
}

@article{dumont_spatio-temporal_2011,
    author = {Dumont, Y. and Dufourd, C.},
    title = {Spatio‐temporal Modeling of Mosquito Distribution},
    journal = {AIP Conference Proceedings},
    volume = {1404},
    number = {1},
    pages = {162-167},
    year = {2011},
    month = {11},
    issn = {0094-243X},
    doi = {10.1063/1.3659916},
    url = {https://doi.org/10.1063/1.3659916},
    eprint = {https://pubs.aip.org/aip/acp/article-pdf/1404/1/162/11645230/162_1_online.pdf},
}

@ARTICLE{Auger2008-lc,
	title     = "The {Ross-Macdonald} model in a patchy environment",
	author    = "Auger, Pierre and Kouokam, Etienne and Sallet, Gauthier and
	Tchuente, Maurice and Tsanou, Berge",
	journal   = "Math. Biosci.",
	publisher = "Elsevier BV",
	volume    =  216,
	number    =  2,
	pages     = "123--131",
	month     =  dec,
	year      =  2008,
	language  = "en"
}

@ARTICLE{Esteva2019-mf,
	title     = "A model for yellow fever with migration",
	author    = "Esteva, Lourdes and Vargas, Cristobal and Yang, Hyun Mo",
	journal   = "Comp and Math Methods",
	publisher = "Hindawi Limited",
	volume    =  1,
	number    =  6,
	month     =  nov,
	year      =  2019,
	copyright = "http://onlinelibrary.wiley.com/termsAndConditions\#vor",
	language  = "en"
}

@ARTICLE{Iggidr2017-jj,
	title     = "Vector borne diseases on an urban environment: The effects of
	heterogeneity and human circulation",
	author    = "Iggidr, A and Koiller, J and Penna, M L F and Sallet, G and
	Silva, M A and Souza, M O",
	journal   = "Ecol. Complex.",
	publisher = "Elsevier BV",
	volume    =  30,
	pages     = "76--90",
	month     =  jun,
	year      =  2017,
	language  = "en"
}

@ARTICLE{Moulay2013-uu,
	title     = "A metapopulation model for chikungunya including populations
	mobility on a large-scale network",
	author    = "Moulay, Djamila and Pign{\'e}, Yoann",
	journal   = "J. Theor. Biol.",
	publisher = "Elsevier BV",
	volume    =  318,
	pages     = "129--139",
	month     =  feb,
	year      =  2013,
	language  = "en"
}

@ARTICLE{Anzo-Hernandez2018-to,
	title     = "The risk matrix of vector-borne diseases in metapopulation
	networks and its relation with local and global {R} 0",
	author    = "Anzo-Hern{\'a}ndez, A and Bonilla-Capilla, B and
	Vel{\'a}zquez-Castro, J and Vel{\'a}zquez-Castro, J and
	Fraguela-Collar, A",
	journal   = "Commun. Nonlinear Sci. Numer. Simul.",
	publisher = "Elsevier BV",
	month     =  jul,
	year      =  2018
}

@ARTICLE{Gimenez-Mujica2025-sf,
	title     = "Estimating the final size of vector-borne epidemics in
	metapopulation networks: methodologies and comparisons",
	author    = "Gim{\'e}nez-Mujica, U J and Vel{\'a}zquez-Castro, J and
	Anzo-Hern{\'a}ndez, A and Herrera-Ram{\'\i}rez, T P and
	Barradas, I",
	journal   = "Bol. Soc. Mat. Mex.",
	publisher = "Springer Science and Business Media LLC",
	volume    =  31,
	number    =  1,
	month     =  mar,
	year      =  2025,
	copyright = "https://www.springernature.com/gp/researchers/text-and-data-mining",
	language  = "en"
}

@ARTICLE{Arino2012-mp,
	title     = "A metapopulation model for malaria with transmission-blocking
	partial immunity in hosts",
	author    = "Arino, Julien and Ducrot, Arnaud and Zongo, Pascal",
	journal   = "J. Math. Biol.",
	publisher = "Springer Science and Business Media LLC",
	volume    =  64,
	number    =  3,
	pages     = "423--448",
	month     =  feb,
	year      =  2012,
	language  = "en"
}

@ARTICLE{Chiaka2021-hw,
  title     = "Land use, environmental, and food consumption patterns in
               sub-Saharan Africa, 2000--2015: A review",
  author    = "Chiaka, Jeffrey Chiwuikem and Zhen, Lin",
  journal   = "Sustainability",
  publisher = "MDPI AG",
  volume    =  13,
  number    =  15,
  pages     = "8200",
  month     =  jul,
  year      =  2021,
  copyright = "https://creativecommons.org/licenses/by/4.0/",
  language  = "en"
}

@ARTICLE{Fetzel2016-xc,
  title     = "Patterns and changes of land use and land-use efficiency in
               Africa 1980--2005: an analysis based on the human appropriation
               of net primary production framework",
  author    = "Fetzel, Tamara and Niedertscheider, Maria and Haberl, Helmut and
               Krausmann, Fridolin and Erb, Karl-Heinz",
  journal   = "Reg. Environ. Change",
  publisher = "Springer Science and Business Media LLC",
  volume    =  16,
  number    =  5,
  pages     = "1507--1520",
  month     =  jun,
  year      =  2016,
  language  = "en"
}

@ARTICLE{Bullock2021-vt,
  title     = "Three decades of land cover change in East Africa",
  author    = "Bullock, Eric L and Healey, Sean P and Yang, Zhiqiang and Oduor,
               Phoebe and Gorelick, Noel and Omondi, Steve and Ouko, Edward and
               Cohen, Warren B",
  journal   = "Land (Basel)",
  publisher = "MDPI AG",
  volume    =  10,
  number    =  2,
  pages     = "150",
  month     =  feb,
  year      =  2021,
  copyright = "https://creativecommons.org/licenses/by/4.0/",
  language  = "en"
}

@ARTICLE{Kimani2021-jb,
  title     = "Impact of human population on land degradation. A critical
               literature review",
  author    = "Kimani, Cassan",
  journal   = "Journal of Environment",
  publisher = "CARI Journals Limited",
  volume    =  1,
  number    =  2,
  pages     = "1--14",
  month     =  jul,
  year      =  2021
}

@article{Friggens2010,
  title = {Anthropogenic Disturbance and the Risk of Flea-Borne Disease Transmission},
  author = {Friggens, Megan M. and Beier, Paul},
  date = {2010-11-01},
  journaltitle = {Oecologia},
  shortjournal = {Oecologia},
  volume = {164},
  number = {3},
  pages = {809--820},
  issn = {1432-1939},
  doi = {10.1007/s00442-010-1747-5},
  url = {https://doi.org/10.1007/s00442-010-1747-5},
  urldate = {2025-12-03},
  langid = {english}
}

@article{Kolimenakis2021,
  title = {The Role of Urbanisation in the Spread of {{Aedes}} Mosquitoes and the Diseases They Transmit—{{A}} Systematic Review},
  author = {Kolimenakis, Antonios and Heinz, Sabine and Wilson, Michael Lowery and Winkler, Volker and Yakob, Laith and Michaelakis, Antonios and Papachristos, Dimitrios and Richardson, Clive and Horstick, Olaf},
  date = {2021-09-09},
  journaltitle = {PLOS Neglected Tropical Diseases},
  shortjournal = {PLOS Neglected Tropical Diseases},
  volume = {15},
  number = {9},
  pages = {e0009631},
  publisher = {Public Library of Science},
  issn = {1935-2735},
  doi = {10.1371/journal.pntd.0009631},
  url = {https://journals.plos.org/plosntds/article?id=10.1371/journal.pntd.0009631},
  urldate = {2025-12-03},
  langid = {english}
}

@TechReport{AUTO,
  author = {Eusebius J. Doedel and Bart E. Oldeman and Alan R. Champneys and Fabio Dercole and Thomas Fairgrieve and Yuri Kuznetsov and Randy Paenroth and Bjorn Sandstede and Xianjun Wang and Chenghai Zhang},
  title  = {AUTO-07p: Continuation and Bifurcation Software for Oridinary Differential Equations},
  year   = {2012},
}

@article{Alkama2016,
  title = {Biophysical Climate Impacts of Recent Changes in Global Forest Cover},
  author = {Alkama, Ramdane and Cescatti, Alessandro},
  date = {2016-02-05},
  journaltitle = {Science},
  volume = {351},
  number = {6273},
  pages = {600--604},
  publisher = {American Association for the Advancement of Science},
  doi = {10.1126/science.aac8083},
  url = {https://www.science.org/doi/10.1126/science.aac8083},
  urldate = {2025-11-20}
}

@article{Cantera2022,
  title = {Low Level of Anthropization Linked to Harsh Vertebrate Biodiversity Declines in {{Amazonia}}},
  author = {Cantera, Isabel and Coutant, Opale and Jézéquel, Céline and Decotte, Jean-Baptiste and Dejean, Tony and Iribar, Amaia and Vigouroux, Régis and Valentini, Alice and Murienne, Jérôme and Brosse, Sébastien},
  date = {2022-06-07},
  journaltitle = {Nature Communications},
  shortjournal = {Nat Commun},
  volume = {13},
  eprint = {35672313},
  eprinttype = {pmid},
  pages = {3290},
  issn = {2041-1723},
  doi = {10.1038/s41467-022-30842-2},
  url = {https://pmc.ncbi.nlm.nih.gov/articles/PMC9174194/},
  urldate = {2025-11-20},
  pmcid = {PMC9174194}
}

@article{DeFrenne2019,
  title = {Global Buffering of Temperatures under Forest Canopies},
  author = {De Frenne, Pieter and Zellweger, Florian and Rodríguez-Sánchez, Francisco and Scheffers, Brett R. and Hylander, Kristoffer and Luoto, Miska and Vellend, Mark and Verheyen, Kris and Lenoir, Jonathan},
  date = {2019-05},
  journaltitle = {Nature Ecology \& Evolution},
  shortjournal = {Nat Ecol Evol},
  volume = {3},
  number = {5},
  pages = {744--749},
  publisher = {Nature Publishing Group},
  issn = {2397-334X},
  doi = {10.1038/s41559-019-0842-1},
  url = {https://www.nature.com/articles/s41559-019-0842-1},
  urldate = {2025-11-20},
  langid = {english}
}

@article{Ellis2013,
  title = {Used Planet: {{A}} Global History},
  shorttitle = {Used Planet},
  author = {Ellis, Erle C. and Kaplan, Jed O. and Fuller, Dorian Q. and Vavrus, Steve and Klein Goldewijk, Kees and Verburg, Peter H.},
  date = {2013-05-14},
  journaltitle = {Proceedings of the National Academy of Sciences},
  volume = {110},
  number = {20},
  pages = {7978--7985},
  publisher = {Proceedings of the National Academy of Sciences},
  doi = {10.1073/pnas.1217241110},
  url = {https://www.pnas.org/doi/10.1073/pnas.1217241110},
  urldate = {2025-11-20}
}

@article{Frishkoff2016a,
  title = {Climate Change and Habitat Conversion Favour the Same Species},
  author = {Frishkoff, Luke O. and Karp, Daniel S. and Flanders, Jon R. and Zook, Jim and Hadly, Elizabeth A. and Daily, Gretchen C. and M'Gonigle, Leithen K.},
  date = {2016},
  journaltitle = {Ecology Letters},
  volume = {19},
  number = {9},
  pages = {1081--1090},
  issn = {1461-0248},
  doi = {10.1111/ele.12645},
  url = {https://onlinelibrary.wiley.com/doi/abs/10.1111/ele.12645},
  urldate = {2025-11-20},
  langid = {english}
}

@article{KathleenLyons2016,
  title = {Holocene Shifts in the Assembly of Plant and Animal Communities Implicate Human Impacts},
  author = {Kathleen Lyons, S. and Amatangelo, Kathryn L. and Behrensmeyer, Anna K. and Bercovici, Antoine and Blois, Jessica L. and Davis, Matt and DiMichele, William A. and Du, Andrew and Eronen, Jussi T. and Tyler Faith, J. and Graves, Gary R. and Jud, Nathan and Labandeira, Conrad and Looy, Cindy V. and McGill, Brian and Miller, Joshua H. and Patterson, David and Pineda-Munoz, Silvia and Potts, Richard and Riddle, Brett and Terry, Rebecca and Tóth, Anikó and Ulrich, Werner and Villaseñor, Amelia and Wing, Scott and Anderson, Heidi and Anderson, John and Waller, Donald and Gotelli, Nicholas J.},
  date = {2016-01},
  journaltitle = {Nature},
  volume = {529},
  number = {7584},
  pages = {80--83},
  publisher = {Nature Publishing Group},
  issn = {1476-4687},
  doi = {10.1038/nature16447},
  url = {https://www.nature.com/articles/nature16447},
  urldate = {2025-11-20},
  langid = {english}
}

@article{Senior2017,
  title = {A Pantropical Analysis of the Impacts of Forest Degradation and Conversion on Local Temperature},
  author = {Senior, Rebecca A. and Hill, Jane K. and González del Pliego, Pamela and Goode, Laurel K. and Edwards, David P.},
  date = {2017},
  journaltitle = {Ecology and Evolution},
  volume = {7},
  number = {19},
  pages = {7897--7908},
  issn = {2045-7758},
  doi = {10.1002/ece3.3262},
  url = {https://onlinelibrary.wiley.com/doi/abs/10.1002/ece3.3262},
  urldate = {2025-11-20},
  langid = {english}
}

@article{Williams2020,
  title = {Human-Dominated Land Uses Favour Species Affiliated with More Extreme Climates, Especially in the Tropics},
  author = {Williams, Jessica J. and Bates, Amanda E. and Newbold, Tim},
  date = {2020},
  journaltitle = {Ecography},
  volume = {43},
  number = {3},
  pages = {391--405},
  issn = {1600-0587},
  doi = {10.1111/ecog.04806},
  url = {https://onlinelibrary.wiley.com/doi/abs/10.1111/ecog.04806},
  urldate = {2025-11-20},
  langid = {english}
}

@article{Hancock2019,
	author  = {Hancock, P. A. and Ritchie, S. A. and Koenraadt, C. J. M. and Scott, T. W. and Hoffmann, A. A. and Godfray, H. C. J.},
	title   = {Predicting the spatial dynamics of Wolbachia infections in Aedes aegypti arbovirus vector populations in heterogeneous landscapes},
	journal = {Journal of Applied Ecology},
	volume  = {56},
	number  = {7},
	pages   = {1674--1686},
	year    = {2019},
	doi     = {10.1111/1365-2664.13423},
}

@article{McCormack2019,
	author  = {McCormack, C. P. and Ghani, A. C. and Ferguson, N. M.},
	title   = {Fine-scale modelling finds that breeding site fragmentation can reduce mosquito population persistence},
	journal = {Communications Biology},
	volume  = {2},
	number  = {1},
	pages   = {273},
	year    = {2019},
	doi     = {10.1038/s42003-019-0525-0},
}

@phdthesis{Lutambi2013,
	author  = {Lutambi, A. M.},
	title   = {Mathematical modelling of Mosquito Dispersal for MalariaVector Control},
	school  = {University of Basel},
	year    = {2013},
	month   = {Jan},
	doi     = {10.5451/UNIBAS-006120662},
}

@ARTICLE{Maneerat2016-th,
	title     = "A spatial agent-based simulation model of the dengue vector
	Aedes aegypti to explore its population dynamics in urban areas",
	author    = "Maneerat, Somsakun and Daud{\'e}, Eric",
	journal   = "Ecol. Modell.",
	publisher = "Elsevier BV",
	volume    =  333,
	pages     = "66--78",
	month     =  aug,
	year      =  2016,
	language  = "en"
}

@article{Roques2016,
	author  = {Roques, L. and Bonnefon, O.},
	title   = {Modelling Population Dynamics in Realistic Landscapes with Linear Elements: A Mechanistic-Statistical Reaction-Diffusion Approach},
	journal = {PLOS ONE},
	volume  = {11},
	number  = {3},
	pages   = {e0151217},
	year    = {2016},
	month   = {Mar},
	doi     = {10.1371/JOURNAL.PONE.0151217},
}

@ARTICLE{Dye2024-sw,
	title     = "Efficacy of Wolbachia-based mosquito control: Predictions of a
	spatially discrete mathematical model",
	author    = "Dye, David and Cain, John W",
	journal   = "PLoS One",
	publisher = "Public Library of Science (PLoS)",
	volume    =  19,
	number    =  3,
	pages     = "e0297964",
	month     =  mar,
	year      =  2024,
	copyright = "http://creativecommons.org/licenses/by/4.0/",
	language  = "en"
}

@ARTICLE{Cui2025-bt,
	title     = "Global dynamics of mosquito population model with seasonality
	and spatial heterogeneity",
	author    = "Cui, Menglei and Li, Fuxiang and Geng, Fengjie",
	journal   = "Math. Methods Appl. Sci.",
	publisher = "Wiley",
	volume    =  48,
	number    =  12,
	pages     = "11850--11862",
	month     =  aug,
	year      =  2025,
	copyright = "http://onlinelibrary.wiley.com/termsAndConditions\#vor",
	language  = "en"
}

@phdthesis{Cailly2011,
	author  = {Cailly, P.},
	title   = {Modélisation de la dynamique spatio-temporelle d’une population de moustiques, sources de nuisances et vecteurs d’agents pathogènes},
	school  = {Universit\'e de Caen Basse-Normandie},
	year    = {2011},
	month   = {Sept},
	language = {French},
}

@article{daSilva2020,
	author  = {da Silva, M. R. and Lugão, P. H. G. and Chapiro, G.},
	title   = {Modeling and simulation of the spatial population dynamics of the Aedes aegypti mosquito with an insecticide application},
	journal = {Parasites \& Vectors},
	volume  = {13},
	number  = {1},
	pages   = {1--13},
	year    = {2020},
	month   = {Nov},
	doi     = {10.1186/S13071-020-04426-2},
}

@phdthesis{nguyen:tel-04687836,
	TITLE = {{Spatial modeling of invasion dynamics : applications to biological control of Aedes spp. (Diptera culicidae)}},
	AUTHOR = {Nguyen, Thi Quynh Nga},
	URL = {https://theses.hal.science/tel-04687836},
	NUMBER = {2024PA131018},
	SCHOOL = {{Universit{\'e} Paris-Nord - Paris XIII}},
	YEAR = {2024},
	MONTH = Jun,
	TYPE = {Theses},
	HAL_ID = {tel-04687836},
	HAL_VERSION = {v1},
}

@ARTICLE{Dufourd2012-dd,
	title     = "Modeling and Simulations of Mosquito Dispersal. The Case of
	Aedes albopictus",
	author    = "Dufourd, Claire and Dumont, Yves",
	journal   = "Biomath",
	publisher = "Institute of Mathematics and Informatics Bulgarian Academy of
	Sciences",
	volume    =  1,
	number    =  2,
	month     =  dec,
    doi = {https://doi.org/10.11145/j.biomath.2012.09.262},
	year      =  2012
}

@ARTICLE{Dufourd2013-uu,
	title     = "Impact of environmental factors on mosquito dispersal in the
	prospect of sterile insect technique control",
	author    = "Dufourd, Claire and Dumont, Yves",
	journal   = "Comput. Math. Appl.",
	publisher = "Elsevier BV",
	volume    =  66,
	number    =  9,
	pages     = "1695--1715",
	month     =  nov,
	year      =  2013,
	copyright = "https://www.elsevier.com/open-access/userlicense/1.0/",
	language  = "en"
}

@article{SAUCEDO2022742,
title = {Host movement, transmission hot spots, and vector-borne disease dynamics on spatial networks},
journal = {Infectious Disease Modelling},
volume = {7},
number = {4},
pages = {742-760},
year = {2022},
issn = {2468-0427},
doi = {https://doi.org/10.1016/j.idm.2022.10.006},
url = {https://www.sciencedirect.com/science/article/pii/S2468042722000835},
author = {Omar Saucedo and Joseph H. Tien}
}

@book{higley1989manual,
  title={Manual of Entomology and Pest Management},
  author={Higley, L.G. and Karr, L.L. and Pedigo, L.P.},
  isbn={9780023933509},
  lccn={89127415},
  url={https://books.google.cm/books?id=2eQnAQAAMAAJ},
  year={1989},
  publisher={Macmillan}
}

@book{tibayrenc2024genetics,
  title={Genetics and evolution of infectious diseases},
  author={Tibayrenc, Michel},
  year={2024},
  publisher={Elsevier}
}

@ARTICLE{Crowder2013,
	title     = "West nile virus prevalence across landscapes is mediated by
	local effects of agriculture on vector and host communities",
	author    = "Crowder, David W and Dykstra, Elizabeth A and Brauner, Jo Marie
	and Duffy, Anne and Reed, Caitlin and Martin, Emily and
	Peterson, Wade and Carri{\`e}re, Yves and Dutilleul, Pierre and
	Owen, Jeb P",
	journal   = "PLoS One",
	publisher = "Public Library of Science (PLoS)",
	volume    =  8,
	number    =  1,
	pages     = "e55006",
	month     =  jan,
	year      =  2013,
	language  = "en"
}

@ARTICLE{Ferraguti2021,
	title     = "Ecological effects on the dynamics of West Nile virus and avian
	Plasmodium: The importance of mosquito communities and landscape",
	author    = "Ferraguti, Martina and Mart{\'\i}nez-de la Puente, Josu{\'e} and
	Figuerola, Jordi",
	journal   = "Viruses",
	publisher = "MDPI AG",
	volume    =  13,
	number    =  7,
	pages     = "1208",
	month     =  jun,
	year      =  2021,
	copyright = "https://creativecommons.org/licenses/by/4.0/",
	language  = "en"
}

@ARTICLE{Fletcher2018,
	title     = "The negative effects of habitat fragmentation operate at the
	scale of dispersal",
	author    = "Fletcher, Jr, Robert J and Reichert, Brian E and Holmes,
	Katherine",
	journal   = "Ecology",
	publisher = "Wiley",
	volume    =  99,
	number    =  10,
	pages     = "2176--2186",
	month     =  oct,
	year      =  2018,
	copyright = "http://onlinelibrary.wiley.com/termsAndConditions\#vor",
	language  = "en"
}

@ARTICLE{Shaw2024,
	title    = "The roles of habitat isolation, landscape connectivity and host
	community in tick-borne pathogen ecology",
	author   = "Shaw, Grace and Lilly, Marie and Mai, Vincent and Clark, Jacoby
	and Summers, Shannon and Slater, Kimetha and Karpathy, Sandor and
	Nakano, Angie and Crews, Arielle and Lawrence, Alexandra and
	Salomon, Jordan and Sambado, Samantha Brianne and Swei, Andrea",
	journal  = "R. Soc. Open Sci.",
	volume   =  11,
	number   =  11,
	pages    = "240837",
	month    =  nov,
	year     =  2024,
	language = "en"
}

@ARTICLE{Tracey2014,
	title     = "An agent‐based movement model to assess the impact of landscape
	fragmentation on disease transmission",
	author    = "Tracey, Jeff A and Bevins, Sarah N and VandeWoude, Sue and
	Crooks, Kevin R",	
	journal   = "Ecosphere",
	publisher = "Wiley",
	volume    =  5,
	number    =  9,
	pages     = "1--24",
	month     =  sep,
	year      =  2014,
	copyright = "http://creativecommons.org/licenses/by/3.0/",
	language  = "en"
}

@ARTICLE{Emmanuel2011,
	title     = "Landscape epidemiology: An emerging perspective in the mapping
	and modelling of disease and disease risk factors",
	author    = "Emmanuel, Nnadi Nnaemeka and Loha, Nimzing and Okolo, Mark
	Ojogba and Ikenna, Onyedibe Kenneth",
	journal   = "Asian Pac. J. Trop. Dis.",
	publisher = "Elsevier BV",
	volume    =  1,
	number    =  3,
	pages     = "247--250",
	month     =  sep,
	year      =  2011,
	language  = "en"
}

@ARTICLE{Patz2004,
	title     = "Unhealthy landscapes: Policy recommendations on land use change
	and infectious disease emergence",
	author    = "Patz, Jonathan A and Daszak, Peter and Tabor, Gary M and
	Aguirre, A Alonso and Pearl, Mary and Epstein, Jon and Wolfe,
	Nathan D and Kilpatrick, A Marm and Foufopoulos, Johannes and
	Molyneux, David and Bradley, David J and {Members of the Working
	Group on Land Use Change Disease Emergence}",
	journal   = "Environ. Health Perspect.",
	publisher = "Environmental Health Perspectives",
	volume    =  112,
	number    =  10,
	pages     = "1092--1098",
	month     =  jul,
	year      =  2004,
	language  = "en"
}

@ARTICLE{Lambin2010,
	title    = "Pathogenic landscapes: Interactions between land, people, disease
	vectors, and their animal hosts",
	author   = "Lambin, Eric F and Tran, Annelise and Vanwambeke, Sophie O and
	Linard, Catherine and Soti, Val{\'e}rie",
	journal  = "International Journal of Health Geographics",
	volume   =  9,
	number   =  1,
	pages    = "54",
	month    =  oct,
	year     =  2010
}

@article{Heath2022,
	title = {Mathematical modelling of the mosquito Aedes polynesiensis in a heterogeneous environment},
	journal = {Mathematical Biosciences},
	volume = {348},
	pages = {108811},
	year = {2022},
	issn = {0025-5564},
	doi = {https://doi.org/10.1016/j.mbs.2022.108811},
	url = {https://www.sciencedirect.com/science/article/pii/S0025556422000244},
	author = {Katherine Heath and Michael B. Bonsall and J\'er\^ome Marie and Herv\'e C. Bossin}
}

@article{Lutambi2013MBS,
	title = {Mathematical modelling of mosquito dispersal in a heterogeneous environment},
	journal = {Mathematical Biosciences},
	volume = {241},
	number = {2},
	pages = {198-216},
	year = {2013},
	issn = {0025-5564},
	doi = {https://doi.org/10.1016/j.mbs.2012.11.013},
	url = {https://www.sciencedirect.com/science/article/pii/S0025556412002337},
	author = {Angelina Mageni Lutambi and Melissa A. Penny and Thomas Smith and Nakul Chitnis}
}

@BOOK{Walter1998,
	title = {Ordinary Differential Equations},
	publisher = {Springer},
	year = {1998},
	author = {Walter, W.}
}

\appendix

\section{Proof of Theorem \ref{theorem-global-stability}}\label{AppendixC0}

Straightforward computations lead that the extinction state $e_{00}=(0,0)$ is always an equilibrium of system \eqref{1p-model-no-diffusion}; when $\mathcal{R}_{0A,1}>1$, then the vector-free state $e_{0A}=\left(0,(1-\alpha_{1})K_{A,1}\left(1-\dfrac{1}{\mathcal{R}_{0A,1}}\right)\right)$ is an equilibrium of system \eqref{1p-model-no-diffusion}. Moreover, assume that $\mathcal{R}_{0A,1}>1$ and $\mathcal{R}_{0V,1}>1$. Then, the coexistence state $$e_{VA}=\left(K_{V,1}\left(1-\dfrac{1}{\mathcal{R}_{0V,1}}\right),(1-\alpha_{1})K_{A,1}\left(1-\dfrac{1}{\mathcal{R}_{0A,1}}\right)\right)$$ is an equilibrium of system \eqref{1p-model-no-diffusion}.  In addition,  in system \eqref{1p-model-no-diffusion}, the dynamics of the state variable $A_{1}$ is uncoupled. Direct computations lead that from \eqref{1p-model-no-diffusion}$_2$, $A_{1}$ converges toward 0 when $\mathcal{R}_{0A,1}\leq1$ while it converges toward $(1-\alpha_{1})K_{A,1}\left(1-\dfrac{1}{\mathcal{R}_{0A,1}}\right)$ when $\mathcal{R}_{0A,1}>1$. The results for the state variable $V_{1}$ are deduced by applying a standard limit system argument, see also \cite{Tewa2013} and references therein.

\section{Proof of Theorem \ref{theoreme-stabilite-1waydiffusion}}\label{appendixA}
    Equilibria of the system \eqref{2p-model-1way-diffusion} are solutions of the system
    
\begin{equation}\label{2p-model-1way-diffusion-equilibria}
    \left\{
\begin{array}{l}
r_V\dfrac{A_{1}}{A_{1}+a_1}V_{1}\left(1-\dfrac{V_{1}}{K_{V,1}}\right)-d_{V,12}\dfrac{\alpha_{1}^n}{\alpha_{1}^n+c_V^n}\dfrac{1}{1+b_{1}A_{1}}V_{1}-\mu_{V,1}V_1=0,\\
     
r_AA_{1}\left(1-\dfrac{A_{1}}{(1-\alpha_{1})K_{A,1}}\right)-d_{A,12}\dfrac{\alpha_{1}^n}{\alpha_{1}^n+c_A^n}A_{1}-\mu_{A,1}A_1=0,\\
     
 r_V\dfrac{A_{2}}{A_{2}+a_2}V_{2}\left(1-\dfrac{V_{2}}{K_{V,2}}\right)+d_{V,12}\dfrac{\alpha_{1}^n}{\alpha_{1}^n+c_V^n}\dfrac{1}{1+b_{1}A_{1}}V_{1}-\mu_{V,2}V_2=0,\\

 r_AA_{2}\left(1-\dfrac{A_{2}}{K_{A,2}}\right)+d_{A,12}\dfrac{\alpha_{1}^n}{\alpha_{1}^n+c_A^n}A_{1}-\mu_{A,2}A_2=0.\\
\end{array}
    \right.
\end{equation}
Recall that
$$
\begin{array}{c}
     d_{V,12}'=d_{V,12}\dfrac{\alpha_{1}^n}{\alpha_{1}^n+c_V^n}, \quad d_{A,12}'=d_{A,12}\dfrac{\alpha_{1}^n}{\alpha_{1}^n+c_A^n}, \quad K_{A,1}'=(1-\alpha_1)K_{A,1}.
\end{array}
$$

From \eqref{2p-model-1way-diffusion-equilibria}, it follows that 

\begin{equation}\label{2p-equilibria}
    \left\{
\begin{array}{l}
V_1=0 \mbox{ or }r_V\dfrac{A_{1}}{A_{1}+a_1}\left(1-\dfrac{V_{1}}{K_{V,1}}\right)-d_{V,12}'\dfrac{1}{1+b_{1}A_{1}}-\mu_{V,1}=0,\\
     
A_1=0 \mbox{ or } r_A\left(1-\dfrac{A_{1}}{K_{A,1}'}\right)-d_{A,12}'-\mu_{A,1}=0,\\
     
 r_V\dfrac{A_{2}}{A_{2}+a_2}V_{2}\left(1-\dfrac{V_{2}}{K_{V,2}}\right)+d_{V,12}'\dfrac{1}{1+b_{1}A_{1}}V_{1}-\mu_{V,2}V_2=0,\\

 r_AA_{2}\left(1-\dfrac{A_{2}}{K_{A,2}}\right)+d_{A,12}'A_{1}-\mu_{A,2}A_2=0.\\
\end{array}
    \right.
\end{equation}

Assume that $A_1=0$ and $V_1=0$. We deduce that
\begin{equation}\label{2p-reduce1}
    \left\{
\begin{array}{l}
V_2=0\mbox{ or } r_V\dfrac{A_{2}}{A_{2}+a_2}\left(1-\dfrac{V_{2}}{K_{V,2}}\right)-\mu_{V,2}=0,\\

A_2=0 \mbox{ or } r_A\left(1-\dfrac{A_{2}}{K_{A,2}}\right)-\mu_{A,2}=0.\\
\end{array}
    \right.
\end{equation}
We therefore obtain:
\begin{itemize}
    \item the extinction equilibrium $E_{0000}=(0,0,0,0)$;
    \item a boundary equilibrium $E_{00V_2A_2}=\left(0,0,K_{V,2}\left(1-\dfrac{1}{\mathcal{R}_{0V,2}}\right),K_{A,2}\left(1-\dfrac{1}{\mathcal{R}_{0A,2}}\right)\right)$ that exists when $\mathcal{R}_{0A,2}>1$ and $\mathcal{R}_{0V,2}>1$;
    \item a boundary equilibrium $E_{000A_2}=\left(0,0,0,K_{A,2}\left(1-\dfrac{1}{\mathcal{R}_{0A,2}}\right)\right)$ if $\mathcal{R}_{0A,2}>1$.
\end{itemize}

Let us set $$\mathcal{Q}_{0A,1}=\dfrac{r_A}{\mu_{A,1}+d_{A,12}'}.$$
Assume that $\mathcal{Q}_{0A,1}>1$, $V_1=0$ and $A_1=\bar{A}_1=K_{A,1}'\left(1-\dfrac{1}{\mathcal{Q}_{0A,1}}\right)$. We obtain from \eqref{2p-equilibria} 
\begin{equation}\label{2p-reduce2}
    \left\{
\begin{array}{l}
V_2=0 \mbox{ or } r_V\dfrac{A_{2}}{A_{2}+a_2}\left(1-\dfrac{V_{2}}{K_{V,2}}\right)-\mu_{V,2}=0,\\

 -\dfrac{r_A}{K_{A,2}} A_{2}^2+\left(r_A-\mu_{A,2}\right)A_2+d_{A,12}'\bar{A}_{1}=0.\\
\end{array}
    \right.
\end{equation}

 Equation \eqref{2p-reduce2}$_2$ is a quadratic equation in $A_2$ that has a unique positive solution. Its discriminant is $$\Delta_A=(r_A-\mu_{A,2})^2+4d_{A,12}'\bar{A}_{1}\dfrac{r_A}{K_{A,2}}>0.$$ The positive solution of \eqref{2p-reduce2}$_2$ is
 $$A_{2,+}=\dfrac{K_{A,2}}{2}\left(1-\dfrac{1}{\mathcal{R}_{0A,2}}+\dfrac{\sqrt{\Delta_A}}{r_A}\right).$$

 We therefore obtain from \eqref{2p-reduce2}$_1$ that $V_2=0$ or $V_2=K_{V,2}\left(1-\dfrac{1}{\mathcal{Q}_{0V,2}}\right)$ where $$
 \mathcal{Q}_{0V,2}=\dfrac{r_V}{\mu_{V,2}}\dfrac{A_{2,+}}{A_{2,+}+a_2}>1.$$ 
We deduce:
\begin{itemize}
    \item a boundary equilibrium $E_{0A_10A_2}=\left(0,K_{A,1}'\left(1-\dfrac{1}{\mathcal{Q}_{0A,1}}\right),0,\dfrac{K_{A,2}}{2}\left(1-\dfrac{1}{\mathcal{R}_{0A,2}}+\dfrac{\sqrt{\Delta_A}}{r_A}\right)\right)$ that exists whenever $\mathcal{Q}_{0A,1}>1$;
    \item a boundary equilibrium $$E_{0A_1V_2A_2}=\left(0,K_{A,1}'\left(1-\dfrac{1}{\mathcal{Q}_{0A,1}}\right),K_{V,2}\left(1-\dfrac{1}{\mathcal{Q}_{0V,2}}\right)   
    ,\dfrac{K_{A,2}}{2}\left(1-\dfrac{1}{\mathcal{R}_{0A,2}}+\dfrac{\sqrt{\Delta_A}}{r_A}\right)\right)$$ that exists whenever $\mathcal{Q}_{0A,1}>1$ and $\mathcal{Q}_{0V,2}>1$.
\end{itemize}
Assume that $A_1=\bar{A}_1=K_{A,1}'\left(1-\dfrac{1}{\mathcal{Q}_{0A,1}}\right)$ and $V_1=\bar{V}_1=K_{V,1}\left(1-\dfrac{1}{\mathcal{Q}_{0V,1}}\right)$ where $\mathcal{Q}_{0A,1}>1$ and $$\mathcal{Q}_{0V,1}=\dfrac{r_V\bar{A}_1}{(\bar{A}_1+a_1)\left(\mu_{V,1}+\dfrac{d_{V,12}'}{1+b_1\bar{A}_1}\right)}>1.$$ We obtain from \eqref{2p-equilibria} 

\begin{equation}\label{2p-reduce4}
    \left\{
\begin{array}{l}
 r_V\dfrac{A_{2}}{A_{2}+a_2}V_{2}\left(1-\dfrac{V_{2}}{K_{V,2}}\right)+d_{V,12}'\dfrac{1}{1+b_{1}\bar{A}_{1}}\bar{V}_{1}-\mu_{V,2}V_2=0,\\

 r_AA_{2}\left(1-\dfrac{A_{2}}{K_{A,2}}\right)+d_{A,12}'\bar{A}_{1}-\mu_{A,2}A_2=0.\\
\end{array}
    \right.
\end{equation}
 Equation \eqref{2p-reduce4}$_2$ is the same as equation \eqref{2p-reduce2}$_2$ and its positive solution is
 $$A_{2,+}=\dfrac{K_{A,2}}{2}\left(1-\dfrac{1}{\mathcal{R}_{0A,2}}+\dfrac{\sqrt{\Delta_A}}{r_A}\right).$$
Similarly, \eqref{2p-reduce4}$_1$ is a quadratic equation in $V_2$ that has a unique positive solution. 

Its discriminant is $$\Delta_V=\left(r_V\dfrac{A_{2,+}}{A_{2,+}+a_2}-\mu_{V,2}\right)^2+4d_{V,12}'\bar{V}_{1}\dfrac{r_V}{K_{V,2}}\dfrac{A_{2,+}}{A_{2,+}+a_2}\dfrac{1}{1+b_1\bar{A}_1}V_1>0.$$
The positive solution of \eqref{2p-reduce4}$_1$ is therefore

$$V_{2,+}=\dfrac{K_{V,2}(A_{2,+}+a_2)}{2A_{2,+}}\left(1-\dfrac{\mu_{V,2}}{r_{V}}+\dfrac{\sqrt{\Delta_V}}{r_V}\right).$$
Thus, we deduce the coexistence equilibrium $E_{V_1A_1V_2A_2}=$
$$\left(K_{V,1}\left(1-\frac{1}{\mathcal{Q}_{0V,1}}\right),K_{A,1}'\left(1-\frac{1}{\mathcal{Q}_{0A,1}}\right),\frac{K_{V,2}(A_{2,+}+a_2)}{2A_{2,+}}\left(1-\frac{\mu_{V,2}}{r_{V}}+\frac{\sqrt{\Delta_V}}{r_V}\right),\frac{K_{A,2}}{2}\left(1-\frac{1}{\mathcal{R}_{0A,2}}+\frac{\sqrt{\Delta_A}}{r_A}\right)\right)$$ that exists whenever $\mathcal{Q}_{0A,1}>1$ and $\mathcal{Q}_{0V,1}>1$. This ends the first part of the proof. In the sequel, we now deal with stability results.

The Jacobian matrix of system \eqref{2p-model-1way-diffusion} at $E_{0000}$ is
$$J_{E_{0000}}=
\begin{pmatrix}
    -d_{V,12}'-\mu_{V,1} & 0 & 0 & 0 \\
    0 & r_A\left(1-\dfrac{1}{\mathcal{Q}_{0A,1}}\right) & 0 & 0\\
    d_{V,12}' & 0 & -\mu_{V,2} & 0\\
    0 & d_{A,12}' & 0 & r_A\left(1-\dfrac{1}{\mathcal{R}_{0A,2}}\right)
\end{pmatrix}.
$$
Therefore, $E_{0000}$ is LAS when $\mathcal{Q}_{0A,1}<1$ and $\mathcal{R}_{0A,2}<1$. It is unstable if either $\mathcal{Q}_{0A,1}>1$ or $\mathcal{R}_{0A,2}>1$.

Assume that equilibrium $E_{00V_2A_2}$ exists, that is $\mathcal{R}_{0A,2}>1$ and $\mathcal{R}_{0V,2}|_{\alpha_2=0}>1$. The Jacobian matrix at $E_{00V_2A_2}$ is
$$J_{E_{00V_2A_2}}=
\begin{pmatrix}
    -d_{V,12}'-\mu_{V,1} & 0 & 0 & 0 \\
    0 & r_A\left(1-\dfrac{1}{\mathcal{Q}_{0A,1}}\right) & 0 & 0\\
    d_{V,12}' & 0 & -r_V\dfrac{A_2}{A_2+a_2}\dfrac{V_2}{K_{V,2}} & r_VV_2\left(1-\dfrac{V_2}{K_{V,2}}\right)\dfrac{a_2}{(A_2+a_2)^2}\\
    0 & d_{A,12}' & 0 & -r_A\dfrac{A_2}{K_{A,2}}
\end{pmatrix}.
$$
Hence, $E_{00V_2A_2}$ is LAS when $\mathcal{Q}_{0A,1}<1$ and unstable if $\mathcal{Q}_{0A,1}>1$.

Assume that equilibrium $E_{000A_2}$ exists, that is $\mathcal{R}_{0A,2}>1$. The Jacobian matrix at $E_{000A_2}$ is
$$J_{E_{000A_2}}=
\begin{pmatrix}
    -d_{V,12}'-\mu_{V,1} & 0 & 0 & 0 \\
    0 & r_A\left(1-\dfrac{1}{\mathcal{Q}_{0A,1}}\right) & 0 & 0\\
    d_{V,12}' & 0 & \mu_{V,2}(\mathcal{R}_{0V,2}|_{\alpha_2=0}-1) & 0\\
    0 & d_{A,12}' & 0 & -r_A\dfrac{A_2}{K_{A,2}}
\end{pmatrix}.
$$
Hence, $E_{000A_2}$ is LAS when $\mathcal{Q}_{0A,1}<1$ and $\mathcal{R}_{0V,2}|_{\alpha_2=0}<1$. It is unstable if either $\mathcal{Q}_{0A,1}>1$ or $\mathcal{R}_{0V,2}|_{\alpha_2=0}>1$.

Assume that equilibrium $E_{0A_10A_2}$ exists, that is $\mathcal{Q}_{0A,1}>1$. The Jacobian matrix at $E_{0A_10A_2}$ is
$$J_{E_{0A_10A_2}}=
\begin{pmatrix}
    \left(d_{V,12}\dfrac{1}{1+b_1A_1}+\mu_{V,1}\right)\left(\mathcal{Q}_{0V,1}-1\right) & 0 & 0 & 0 \\
    0 & -r_A\dfrac{A_1}{K_{A,1}'} & 0 & 0\\
    d_{V,12}'\dfrac{1}{1+b_1A_1} & 0 & \mu_{V,2}(\mathcal{R}_{0V,2}|_{\alpha_2=0}-1) & 0\\
    0 & d_{A,12}' & 0 & -r_A\dfrac{A_2}{K_{A,2}}
\end{pmatrix}.
$$
Hence, $E_{0A_10A_2}$ is LAS when $\mathcal{Q}_{0V,1}<1$ and $\mathcal{R}_{0V,2}|_{\alpha_2=0}<1$. It is unstable if either $\mathcal{Q}_{0V,1}>1$ or $\mathcal{R}_{0V,2}|_{\alpha_2=0}>1$.

Assume that equilibrium $E_{0A_1V_2A_2}$ exists, that is $\mathcal{Q}_{0A,1}>1$ and $\mathcal{Q}_{0V,2}>1$. The Jacobian matrix at $E_{0A_1V_2A_2}$ is $J_{E_{0A_1V_2A_2}}=$
$$
\begin{pmatrix}
    \left(d_{V,12}\frac{1}{1+b_1A_1}+\mu_{V,1}\right)\left(\mathcal{Q}_{0V,1}-1\right) & 0 & 0 & 0 \\
    0 & -r_A\frac{A_1}{K_{A,1}'} & 0 & 0\\
    d_{V,12}'\frac{1}{1+b_1A_1} & 0 & -r_V\frac{A_2}{A_2+a_2}\frac{V_2}{K_{V,2}} & r_VV_2\left(1-\frac{V_2}{K_{V,2}}\right)\frac{a_2}{(A_2+a_2)^2}\\
    0 & d_{A,12}' & 0 & -r_A\frac{A_2}{K_{A,2}}-d_{A,12}'\frac{A_1}{A_2}
\end{pmatrix}.
$$
Hence, $E_{0A_1V_2A_2}$ is LAS when $\mathcal{Q}_{0V,1}<1$ and unstable if $\mathcal{Q}_{0V,1}>1$.

Assume that the coexistence equilibrium $E_{V_1A_1V_2A_2}$ exists, that is $\mathcal{Q}_{0A,1}>1$ and $\mathcal{Q}_{0V,1}>1$. The Jacobian matrix at $E_{V_1A_1V_2A_2}$ is $J_{E_{V_1A_1V_2A_2}}=$
$$
\begin{pmatrix}
    -r_V\frac{A_1}{A_1+a_1}\frac{V_1}{K_{V,1}} & J_{12} & 0 & 0 \\
    0 & -r_A\frac{A_1}{K_{A,1}'} & 0 & 0\\
    d_{V,12}'\frac{1}{1+b_1A_1} &     
    -d_{V,12}'\frac{b_1V_1}{(1+b_1A_1)^2}    
    &     
    -r_V\frac{A_2}{A_2+a_2}\frac{V_2}{K_{V,2}} - d_{V,12}'\frac{1}{1+b_1A_1}\frac{V_1}{V_2}    
    & J_{34}\\
    0 & d_{A,12}' & 0 & -r_A\frac{A_2}{K_{A,2}}-d_{A,12}'\frac{A_1}{A_2}
\end{pmatrix},
$$
where $J_{12}=r_VV_1\left(1-\frac{V_1}{K_{V,1}}\right)\frac{a_1}{(A_1+a_1)^2}+d_{V,12}'\frac{b_1}{(1+b_1A_1)^2}V_1$ and $J_{34}=r_VV_2\left(1-\frac{V_2}{K_{V,2}}\right)\frac{a_2}{(A_2+a_2)^2}.$ Eigenvalues of $J_{E_{V_1A_1V_2A_2}}$ are on its main diagonal and are negative.
Hence, the coexistence equilibrium $E_{V_1A_1V_2A_2}$ is LAS whenever it exits. This ends the proof.

\section{Proof of Theorem \ref{theoreme-ekilibre-LAS-2d-migration}}\label{AppendixC}
The first part of the proof deals with the existence of equilibria of the system (\ref{2p-model}) while the second part will deal with their stability analysis.

 Equilibria of system (\ref{2p-model}) are solutions of the system of equations given by
\begin{equation}\label{2p-model-ekilibre}
    \left\{
\begin{array}{l}
 r_V\dfrac{A_{1}}{A_{1}+a_1}V_{1}\left(1-\dfrac{V_{1}}{K_{V,1}}\right)-d_{V,12}\dfrac{\alpha_{1}^n}{\alpha_{1}^n+c_V^n}\dfrac{1}{1+b_{1}A_{1}}V_{1}+d_{V,21}\dfrac{\alpha_{2}^n}{\alpha_{2}^n+c_V^n}\dfrac{1}{1+b_{2}A_{2}}V_{2}-\mu_{V,1}V_1=0,\\
     
r_AA_{1}\left(1-\dfrac{A_{1}}{(1-\alpha_{1})K_{A,1}}\right)-d_{A,12}\dfrac{\alpha_{1}^n}{\alpha_{1}^n+c_A^n}A_{1}+d_{A,21}\dfrac{\alpha_{2}^n}{\alpha_{2}^n+c_A^n}A_{2}-\mu_{A,1}A_1=0,\\
     
    r_V\dfrac{A_{2}}{A_{2}+a_2}V_{2}\left(1-\dfrac{V_{2}}{K_{V,2}}\right)-d_{V,21}\dfrac{\alpha_{2}^n}{\alpha_{2}^n+c_V^n}\dfrac{1}{1+b_{2}A_{2}}V_{2}+d_{V,12}\dfrac{\alpha_{1}^n}{\alpha_{1}^n+c_V^n}\dfrac{1}{1+b_{1}A_{1}}V_{1}-\mu_{V,2}V_2=0,\\

     r_AA_{2}\left(1-\dfrac{A_{2}}{(1-\alpha_{2})K_{A,2}}\right)-d_{A,21}\dfrac{\alpha_{2}^n}{\alpha_{2}^n+c_A^n}A_{2}+d_{A,12}\dfrac{\alpha_{1}^n}{\alpha_{1}^n+c_A^n}A_{1}-\mu_{A,2}A_2=0.\\
\end{array}
    \right.
\end{equation}
Since $(0,0,0,0)$ is a solution of system \eqref{2p-model-ekilibre}, we deduce that $E_{0000}=(0,0,0,0)$ is the extinction equilibrium of system (\ref{2p-model}). We now assume that, $V_1=0$ and $V_2=0$ that verify \eqref{2p-model-ekilibre}$_1$ and \eqref{2p-model-ekilibre}$_3$. Then, $A_1$ and $A_2$ are positive solutions of the system

\begin{equation}\label{2p-model-ekilibre-reduit}
    \left\{
\begin{array}{l}
r_AA_{1}\left(1-\dfrac{A_{1}}{(1-\alpha_{1})K_{A,1}}\right)-d_{A,12}\dfrac{\alpha_{1}^n}{\alpha_{1}^n+c_A^n}A_{1}+d_{A,21}\dfrac{\alpha_{2}^n}{\alpha_{2}^n+c_A^n}A_{2}-\mu_{A,1}A_1=0,\\
     
 r_AA_{2}\left(1-\dfrac{A_{2}}{(1-\alpha_{2})K_{A,2}}\right)-d_{A,21}\dfrac{\alpha_{2}^n}{\alpha_{2}^n+c_A^n}A_{2}+d_{A,12}\dfrac{\alpha_{1}^n}{\alpha_{1}^n+c_A^n}A_{1}-\mu_{A,2}A_2=0.\\
\end{array}
    \right.
\end{equation}

Recall that
$$
\begin{array}{c}
     d_{V,12}'=d_{V,12}\dfrac{\alpha_{1}^n}{\alpha_{1}^n+c_V^n}, \quad K_{A,1}'=(1-\alpha_1)K_{A,1},\\
   d_{A,12}'=d_{A,12}\dfrac{\alpha_{1}^n}{\alpha_{1}^n+c_A^n}, \quad K_{A,2}'=(1-\alpha_1)K_{A,2}.
\end{array}
$$
From \eqref{2p-model-ekilibre-reduit}$_1$, we obtain that $$A_2=\dfrac{1}{d'_{A,21}}(u_2A_1+u_1)A_1$$ where $u_1=-(r_A-(d'_{A,12}+\mu_{A,1}))$ and
$u_2=\dfrac{r_A}{K'_{A,1}}$. Similarly, from \eqref{2p-model-ekilibre-reduit}$_2$, we have $$A_1=\dfrac{1}{d'_{A,12}}(v_2A_2+v_1)A_2$$ where $v_1=-(r_A-(d'_{A,21}+\mu_{A,2}))$
and $v_2=\dfrac{r_A}{K'_{A,2}}$. It therefore follows that either $A_1=0$ or $A_1$ is a positive solution of the cubic equation
$$h(A_1):=w_3A_1^3+w_2A_1^2+w_1A_1+w_0=0$$ with
$$
\begin{array}{l}
w_3=\dfrac{v_2u_2^2}{d'_{A,21}},\\
w_2=2\dfrac{v_2u_1u_2}{d'_{A,21}},\\
w_1=\dfrac{v_2u_1^2}{d'_{A,21}}+v_1u_2,\\
w_0=u_1v_1-d'_{A,12}d'_{A,21}.
\end{array}
$$
We discuss the number of positive roots of $h$ by using the Descartes's sign rule and it is summarized in Table \ref{table-positive-roots}

\begin{table}[H]\centering
	\begin{tabular}{ccc|c}
		\hline
		$w_2$ & $w_1$  & $w_0$  & Number of positive roots\\
	\hline
	
	+ & + & + & 0 \\
	+ & + & - & 1\\
	+ & - & + & 2 or 0 \\
	- & + & + & 1\\
	- & - & + & 1 \\
	- & + & - & 2 or 0 \\
	+ & - & - & 1 \\
	- & - & - & 0 \\
			
	\hline
	\end{tabular}
	
	\caption{Number of possible positive roots of $h$.}\label{table-positive-roots}
\end{table}

Note that a positive root $A_1^*$ of $h$ will lead to a biologically meaningful equilibrium if $A_1^*>-\dfrac{u_1}{u_2}.$ Direct computations lead that 
$$h\left(-\dfrac{u_1}{u_2}\right)=-d'_{A,12}d'_{A,21}<0 \mbox{ and } \lim\limits_{A_1\rightarrow+\infty}h(A_1)=+\infty.$$ Using the intermediate values theorem, we deduce that $h$ has at least one root $A_1^*\in \left(-\dfrac{u_1}{u_2},+\infty\right)$. Therefore, to ensure that $A_1^*$ is positive, it suffices to ensure that $-\dfrac{u_1}{u_2}>0$ which holds whenever $\Q_{0A,1}>1$. Using again the intermediate values theorem we also deduce that when $\Q_{0A,1}=1$ a positive boundary equilibrium exists. By a symmetry argument, it follows that $\Q_{0A,2}\geq1$ is also a sufficient condition that ensures the existence of a boundary equilibrium.

Let $(A_{1,+},A_{2,+})$ be a positive solution of system \eqref{2p-model-ekilibre-reduit}, then to deal with positive equilibria of system (\ref{2p-model}), we now need to consider the following sub-system

\begin{equation}\label{2p-model-ekilibre-reduit2}
\left\{
\begin{array}{l}
r_V\dfrac{A_{1,+}}{A_{1,+}+a_1}V_{1}\left(1-\dfrac{V_{1}}{K_{V,1}}\right)-d_{V,12}'\dfrac{1}{1+b_{1}A_{1,+}}V_{1}+d_{V,21}'\dfrac{1}{1+b_{2}A_{2,+}}V_{2}-\mu_{V,1}V_1=0,\\

r_V\dfrac{A_{2,+}}{A_{2,+}+a_2}V_{2}\left(1-\dfrac{V_{2}}{K_{V,2}}\right)-d_{V,21}'\dfrac{1}{1+b_{2}A_{2,+}}V_{2}+d_{V,12}'\dfrac{1}{1+b_{1}A_{1,+}}V_{1}-\mu_{V,2}V_2=0.\\

\end{array}
\right.
\end{equation}

Using the same methodology as in system \eqref{2p-model-ekilibre-reduit} and considering the thresholds

\begin{equation}\label{SOV1}    \mathcal{S}_{0V,1}=\dfrac{r_V\dfrac{A_{1,+}}{A_{1,+}+a_1}}{d_{V,12}'\dfrac{1}{1+b_1A_{1,+}}+\mu_{V,1}}, \mbox{ and } \mathcal{S}_{0V,2}=\dfrac{r_V\dfrac{A_{2,+}}{A_{2,+}+a_2}}{d_{V,21}'\dfrac{1}{1+b_2A_{2,+}}+\mu_{V,2}},
\end{equation}
we deduce that:
\begin{itemize}
	\item system \eqref{2p-model-ekilibre-reduit2} may have at most two positive solutions;
	\item if $\mathcal{S}_{0V,1}\geq1$ or $\mathcal{S}_{0V,2}\geq1$, then system \eqref{2p-model-ekilibre-reduit2} has at least a positive solution.
\end{itemize}

We now deal with stability results. In addition to thresholds defined in \eqref{threshold1}, we also set
\begin{equation}\label{SOV01}    \mathcal{Q}_{0A,2}=\dfrac{r_A}{\mu_{A,2}+d_{A,21}'},\quad \mathcal{S}_{0V,1}=\dfrac{r_V\dfrac{A_{1,+}}{A_{1,+}+a_1}}{d_{V,12}'\dfrac{1}{1+b_1A_{1,+}}+\mu_{V,1}}, \mbox{ and } \mathcal{S}_{0V,2}=\dfrac{r_V\dfrac{A_{2,+}}{A_{2,+}+a_2}}{d_{V,21}'\dfrac{1}{1+b_2A_{2,+}}+\mu_{V,2}}.
\end{equation}

$$B_1 = r_V\dfrac{A_{1,+}}{A_{1,+}+a_1}-d'_{V,12}\dfrac{1}{1+b_1A_{1,+}}-\mu_{V,1}+r_V\dfrac{A_{2,+}}{A_{2,+}+a_2}-d'_{V,21}\dfrac{1}{1+b_2A_{2,+}}-\mu_{V,2}$$

$$B_2 = \left(r_V\dfrac{A_{1,+}}{A_{1,+}+a_1}-d'_{V,12}\dfrac{1}{1+b_1A_{1,+}}-\mu_{V,1}\right)\left(r_V\dfrac{A_{2,+}}{A_{2,+}+a_2}-d'_{V,21}\dfrac{1}{1+b_2A_{2,+}}-\mu_{V,2}\right)\\-d'_{V,21}\dfrac{1}{1+b_2A_{2,+}}d'_{V,12}\dfrac{1}{1+b_1A_{1,+}}$$

Let $\Bar{E}=(V_1,A_1,V_2,A_2)$ be an equilibrium of system (\ref{2p-model}). The jacobian matrix of system (\ref{2p-model}) at $\Bar{E}$ assumes the form

\begin{equation}
    \label{jacobian-matrice-general22}
    J(\Bar{E})=\begin{pmatrix}
        J_{11}(\Bar{E}) & J_{12}(\Bar{E}) & J_{13}(\Bar{E}) & J_{14}(\Bar{E})\\
        0 & J_{22}(\Bar{E}) & 0 & J_{24}(\Bar{E})\\
        J_{31}(\Bar{E}) & J_{32}(\Bar{E}) & J_{33}(\Bar{E}) & J_{34}(\Bar{E})\\
        0 & J_{42}(\Bar{E}) & 0 & J_{44}(\Bar{E})\\
    \end{pmatrix}
\end{equation}
where
$$
\begin{array}{l}
  J_{11}(\Bar{E})= r_V\dfrac{A_1}{A_1+a_1}\left(1-\dfrac{2V_1}{K_{V,1}}\right)-d'_{V,12}\dfrac{1}{1+b_1A_1}-\mu_{V,1},     \\
   J_{12}(\Bar{E})=   r_V\dfrac{a_1}{(A_1+a_1)^2}V_1\left(1-\dfrac{V_1}{K_{V,1}}\right)+d'_{V,12}\dfrac{b_1}{(1+b_1A_1)^2}V_1,\\
   J_{13}(\Bar{E})=d'_{V,21}\dfrac{1}{1+b_2A_2},\\
   J_{14}(\Bar{E})=-d'_{V,21}\dfrac{b_2}{(1+b_2A_2)^2}V_2,\\
    J_{22}(\Bar{E})= r_A\left(1-\dfrac{2A_1}{K'_{A,1}}\right)-d'_{A,12}-\mu_{A,1},\\
    J_{24}(\Bar{E})= d'_{A,21},\\
    J_{31}(\Bar{E})=d'_{V,12}\dfrac{1}{1+b_1A_1},\\
    J_{32}(\Bar{E})=-d'_{V,12}\dfrac{b_1}{(1+b_1A_1)^2}V_1,\\
   J_{33}(\Bar{E})=r_V\dfrac{A_2}{A_2+a_2}\left(1-\dfrac{2V_2}{K_{V,2}}\right)-d'_{V,21}\dfrac{1}{1+b_2A_2}-\mu_{V,2}, \\
   J_{34}(\Bar{E})=   r_V\dfrac{a_2}{(A_2+a_2)^2}V_2\left(1-\dfrac{V_2}{K_{V,2}}\right)+d'_{V,21}\dfrac{b_2}{(1+b_2A_2)^2}V_2,\\
   J_{42}(\Bar{E})= d'_{A,12},\\
   J_{44}(\Bar{E})= r_A\left(1-\dfrac{2A_2}{K'_{A,2}}\right)-d'_{A,21}-\mu_{A,2}.\\
\end{array}
$$
$\mathbf{i)}$ Assuming that the equilibrium $\Bar{E}$ is the extinction equilibrium, that is $\Bar{E}=E_{0000}=(0,0,0,0)$. Then, following (\ref{jacobian-matrice-general22}), the jacobian matrix of system (\ref{2p-model}) at $E_{0000}$ is 
\begin{equation*}
    \label{jacobian-matrice-general}
    J(E_{0000})=\begin{pmatrix}
        -d'_{V,12}-\mu_{V,1} & 0 & d'_{V,21} & 0\\
        0 & r_A-d'_{A,12}-\mu_{A,1} & 0 & d'_{A,21}\\
        d'_{V,12} & 0 & -d'_{V,21}-\mu_{V,2} & 0\\
        0 & d'_{A,12} & 0 & r_A-d'_{A,21}-\mu_{A,2}\\
    \end{pmatrix}.
\end{equation*}
The characteristic polynomial of $J(E_{0000})$ assumes the form

$$p(\lambda)=p_1(\lambda)p_2(\lambda),$$
where
$$p_1(\lambda)=\lambda^2+\lambda(d'_{V,12}+\mu_{V,1}+d'_{V,21}+\mu_{V,2})+d'_{V,12}\mu_{V,2}+\mu_{V,1}(d'_{V,21}+\mu_{V,2})$$
and
$$p_2(\lambda)=\lambda^2-\lambda(2r_A-d'_{A,12}-\mu_{A,1}-d'_{A,21}-\mu_{A,2})+ (r_A-\mu_{A,1})(r_A-d'_{A,21}-\mu_{A,2})-d'_{A,12}(r_A-\mu_{A,2}).$$
Since $d'_{V,12}+\mu_{V,1}+d'_{V,21}+\mu_{V,2}>0$ and $d'_{V,12}\mu_{V,2}+\mu_{V,1}(d'_{V,21}+\mu_{V,2})>0$, we deduce that the extinction equilibrium is locally asymptotically stable (LAS) whenever roots of $p_2$ have negative real parts. That is, if $$2r_A-d'_{A,12}-\mu_{A,1}-d'_{A,21}-\mu_{A,2}<0$$ and $$(r_A-\mu_{A,1})(r_A-d'_{A,21}-\mu_{A,2})-d'_{A,12}(r_A-\mu_{A,2})>0.$$
We have that 
\begin{equation}\label{trace22}
    2r_A-d'_{A,12}-\mu_{A,1}-d'_{A,21}-\mu_{A,2}<0  \Leftrightarrow r_A<\dfrac{d'_{A,12}+\mu_{A,1}+d'_{A,21}+\mu_{A,2}}{2}:=\bar{r}_A.
\end{equation}

Moreover, 

$$
\begin{array}{l}
(r_A-\mu_{A,1})(r_A-d'_{A,21}-\mu_{A,2})-d'_{A,12}(r_A-\mu_{A,2})\\
=
r_A^2-r_A(d'_{A,21}+\mu_{A,2}+\mu_{A,1}+d'_{A,12})+\mu_{A,1}(d'_{A,21}+\mu_{A,2})+d'_{A,12}\mu_{A,2}\\
     :=q(r_A). 
\end{array}
$$
We have 
$$
\begin{array}{lcl}
 \Delta_{r_A}&=&(d'_{A,21}+\mu_{A,2}+\mu_{A,1}+d'_{A,12})^2-4(\mu_{A,1}(d'_{A,21}+\mu_{A,2})+d'_{A,12}\mu_{A,2}), \\
     &=&\left(d'_{A,21} -d'_{A,12} + \mu_{A,2} -\mu_{A,1}\right)^2 + 4 d'_{A,21} d'_{A,12},\\
     &\ge& 0. 
\end{array}
$$

$$\Delta_{r_A}=
\left(d'_{A,21} -d'_{A,12} + \mu_{A,2} -\mu_{A,1}\right)^2 + 4 d'_{A,21} d'_{A,12},$$
%
$$\bar{r}_A=\dfrac{d'_{A,12}+\mu_{A,1}+d'_{A,21}+\mu_{A,2}}{2},$$
$$r_{A,\min}=\bar{r}_A-\dfrac{\sqrt{\Delta_{r_A}}}{2}\quad \mbox{ and } \quad r_{A,\max}=\bar{r}_A+\dfrac{\sqrt{\Delta_{r_A}}}{2}.$$

We also set 
 $$   \begin{array}{l}
J_{1,1}=-r_V\dfrac{A_1}{A_1+a_1}\dfrac{V_1}{K_{V,1}}-d'_{V,21}\dfrac{1}{1+b_2A_2}\dfrac{V_2}{V_1},     \\
J_{1,2}=   r_V\dfrac{a_1}{(A_1+a_1)^2}V_1\left(1-\dfrac{V_1}{K_{V,1}}\right)+d'_{V,12}\dfrac{b_1}{(1+b_1A_1)^2}V_1,\\
J_{1,3}=d'_{V,21}\dfrac{1}{1+b_2A_2},\\
J_{1,4}=-d'_{V,21}\dfrac{b_2}{(1+b_2A_2)^2}V_2,\\
J_{2,2}=-r_A\dfrac{A_1}{K'_{A,1}}-d'_{A,21}\dfrac{A_2}{A_1},\\
J_{2,4}= d'_{A,21},\\
J_{3,1}=d'_{V,12}\dfrac{1}{1+b_1A_1},\\
J_{3,2}=-d'_{V,12}\dfrac{b_1}{(1+b_1A_1)^2}V_1,\\
J_{3,3}=-r_V\dfrac{A_2}{A_2+a_2}\dfrac{V_2}{K_{V,2}}-d'_{V,12}\dfrac{1}{1+b_1A_1}\dfrac{V_1}{V_2}, \\
J_{3,4}=   r_V\dfrac{a_2}{(A_2+a_2)^2}V_2\left(1-\dfrac{V_2}{K_{V,2}}\right)+d'_{V,21}\dfrac{b_2}{(1+b_2A_2)^2}V_2,\\
J_{4,2}= d'_{A,12},\\
J_{4,4}= -r_A\dfrac{A_2}{K'_{A,2}}-d'_{A,12}\dfrac{A_1}{A_2},\\
\end{array}
$$
where for simplicity, $(V_{1},A_{1},V_{2},A_{2})=(V_{1,+},A_{1,+},V_{2,+},A_{2,+}).$ We also consider the two conditions

\begin{dmath*}
  C_1=  -J_{{4,2}}J_{{2,4}}+J_{{4,4}}J_{{3,3}}+J_{{4,4}}J_{{2,2}}+J_{{4,4}}J_{
{1,1}}-J_{{3,1}}J_{{1,3}}+J_{{3,3}}J_{{2,2}}+J_{{3,3}}J_{{1,1}}+J_{{2,
2}}J_{{1,1}} 
      -{\frac {N_1}{-J_{{4,4}}-J_{{3,3}}-J_{{2,2}}-J_{
{1,1}}}}
\end{dmath*}
and
\begin{dmath*}
  C_2=   
J_{{4,2}}J_{{2,4}}J_{{3,3}}+J_{{4,2}}J_{{2,4}}J_{{1,1}}+J_{{4,4}}J_{{3
,1}}J_{{1,3}}-J_{{4,4}}J_{{3,3}}J_{{2,2}}-J_{{4,4}}J_{{3,3}}J_{{1,1}}-
J_{{4,4}}J_{{2,2}}J_{{1,1}}+J_{{3,1}}J_{{1,3}}J_{{2,2}}-J_{{3,3}}J_{{2
,2}}J_{{1,1}}
- \left( -J_{{4,4}}-J_{{3,3}}-J_{{2,2}}-J_{{1,1}}
 \right)  \left( J_{{4,2}}J_{{2,4}}J_{{3,1}}J_{{1,3}}-J_{{4,2}}J_{{2,4
}}J_{{3,3}}J_{{1,1}}-J_{{4,4}}J_{{3,1}}J_{{1,3}}J_{{2,2}}+J_{{4,4}}J_{
{3,3}}J_{{2,2}}J_{{1,1}} \right)  \left( -J_{{4,2}}J_{{2,4}}+J_{{4,4}}
J_{{3,3}}+J_{{4,4}}J_{{2,2}}+J_{{4,4}}J_{{1,1}}-J_{{3,1}}J_{{1,3}}+J_{
{3,3}}J_{{2,2}}+J_{{3,3}}J_{{1,1}}+J_{{2,2}}J_{{1,1}}-{\frac {N_2}{-J_{{4,4}}-J_{{3,3}}-J_{{2,2}}-J_{{1,1}}}} \right) ^{-1}
\end{dmath*}
where

$$
\begin{array}{lcl}
  N_1&=&J_{{4,2}}J_{{2,4}}J_{{3,3}}+J_{{4,2}}J_{{2,4}}J_{
{1,1}}+J_{{4,4}}J_{{3,1}}J_{{1,3}}-J_{{4,4}}J_{{3,3}}J_{{2,2}}-J_{{4,4
}}J_{{3,3}}J_{{1,1}}\\
&&-J_{{4,4}}J_{{2,2}}J_{{1,1}}+J_{{3,1}}J_{{1,3}}J_{
{2,2}}-J_{{3,3}}J_{{2,2}}J_{{1,1}}.   \\
\end{array}
$$
and
$$
\begin{array}{lcl}
N_2 &=& J_{{4,2}
}J_{{2,4}}J_{{3,3}}+J_{{4,2}}J_{{2,4}}J_{{1,1}}+J_{{4,4}}J_{{3,1}}J_{{
1,3}}-J_{{4,4}}J_{{3,3}}J_{{2,2}}\\
&&-J_{{4,4}}J_{{3,3}}J_{{1,1}}-J_{{4,4}
}J_{{2,2}}J_{{1,1}}+J_{{3,1}}J_{{1,3}}J_{{2,2}}-J_{{3,3}}J_{{2,2}}J_{{
1,1}}.\\  \\

\end{array}
$$

 Then $q$ has two positive roots
    $$r_{A,\min}=\bar{r}_A-\dfrac{\sqrt{\Delta_{r_A}}}{2}\quad \mbox{ and } \quad r_{A,\max}=\bar{r}_A+\dfrac{\sqrt{\Delta_{r_A}}}{2}.$$ 
    Moreover, $q(r_A)>0$ for all $r_A\in(0,r_{A,\min})$ or $r_A>r_{A,\max}$. and $q(r_A)<0$ for all $r_A\in(r_{A,\min},r_{A,\max})$. Therefore, taking into account \eqref{trace22}, the extinction equilibrium is LAS whenever $r_A<r_{A,\min}$.

$\mathbf{ii)}$ Assume that a boundary equilibrium $E_{0A_10A_2}=(0,A_{1,+},0,A_{2,+})$ exists. That is, assume that $\mathcal{Q}_{0A,1}\geq1$ or $\mathcal{Q}_{0A,2}\geq1$. Following (\ref{jacobian-matrice-general22}), the jacobian matrix of system (\ref{2p-model}) at $E_{0A_10A_2}$ is 
\begin{equation*}
     J(E_{0A_10A_2})=\begin{pmatrix}
        j_{11} & 0 & d'_{V,21}\dfrac{1}{1+b_2A_{2,+}} & 0\\
        0 & j_{22} & 0 & d'_{A,21}\\
        d'_{V,12}\dfrac{1}{1+b_1A_{1,+}} & 0 & j_{33} & 0\\
        0 & d'_{A,12} & 0 & j_{44}\\
    \end{pmatrix}
\end{equation*}
where
$$
\begin{array}{l}
j_{11}= r_V\dfrac{A_{1,+}}{A_{1,+}+a_1}-d'_{V,12}\dfrac{1}{1+b_1A_{1,+}}-\mu_{V,1},     \\
j_{22}= -r_A\dfrac{A_{1,+}}{K'_{A,1}}-d'_{A,21}\dfrac{A_{2,+}}{A_{1,+}},\\
j_{33}=r_V\dfrac{A_{2,+}}{A_{2,+}+a_2}-d'_{V,21}\dfrac{1}{1+b_2A_{2,+}}-\mu_{V,2}, \\
j_{44}= -r_A\dfrac{A_{2,+}}{K'_{A,2}}-d'_{A,12}\dfrac{A_{1,+}}{A_{2,+}}.\\
\end{array}
$$

The characteristic polynomial of $J(E_{0A_10A_2})$ assumes the form

$$p(\lambda)=p_3(\lambda)p_4(\lambda),$$
where
$$p_3(\lambda)=\lambda^2-\lambda(j_{11}+j_{33})+j_{11}j_{33}-d'_{V,21}\dfrac{1}{1+b_2A_{2,+}}d'_{V,12}\dfrac{1}{1+b_1A_{1,+}}$$
and
$$p_4(\lambda)=\lambda^2-\lambda(j_{22}+j_{44})+j_{22}j_{44}-d'_{A,12}d'_{A,21}.$$
Since $j_{22}+j_{44}<0$ and $j_{22}j_{44}-d'_{A,12}d'_{A,21}=r_A\dfrac{A_{1,+}}{K'_{A,1}}\left(r_A\dfrac{A_{2,+}}{K'_{A,2}}+d'_{A,12}\dfrac{A_{1,+}}{A_{2,+}}\right)+r_Ad_{A,21}'\dfrac{A_{2,+}^2}{A_{1,+}K'_{A,2}}>0$, we deduce that the boundary equilibrium $E_{0A_10A_2}$ is LAS whenever roots of $p_3$ have negative real parts. That is, if $$B_1:=j_{11}+j_{33}<0$$ and $$B_2:=j_{11}j_{33}-d'_{V,21}\dfrac{1}{1+b_2A_{2,+}}d'_{V,12}\dfrac{1}{1+b_1A_{1,+}}>0.$$

$\mathbf{iii)}$ Assume that a positive coexistence equilibrium $E_{V_1A_1V_2A_2}=(V_{1,+},A_{1,+},V_{2,+},A_{2,+})$ exists. That is, assume that ($\mathcal{Q}_{0A,1}\geq1$ or $\mathcal{Q}_{0A,2}\geq1$) and ($\mathcal{S}_{0A,1}\geq1$ or $\mathcal{S}_{0A,2}\geq1$). Following (\ref{jacobian-matrice-general22}), the jacobian matrix of system (\ref{2p-model}) at $E_{V_1A_1V_2A_2}=\bar{E}=(V_1,A_1,V_2,A_2)$ is 
\begin{equation*}
\label{jacobian-matrice-general}
J(\bar{E})=\begin{pmatrix}
J_{1,1} & J_{1,2} & J_{1,3} & J_{1,4}\\
0 & J_{2,2} & 0 & J_{2,4}\\
J_{3,1} & J_{3,2} & J_{3,3} & J_{3,4}\\
0 & J_{4,2} & 0 & J_{4,4}\\
\end{pmatrix}
\end{equation*}
where
$$   \begin{array}{l}
J_{1,1}=-r_V\dfrac{A_1}{A_1+a_1}\dfrac{V_1}{K_{V,1}}-d'_{V,21}\dfrac{1}{1+b_2A_2}\dfrac{V_2}{V_1},     \\
J_{1,2}=   r_V\dfrac{a_1}{(A_1+a_1)^2}V_1\left(1-\dfrac{V_1}{K_{V,1}}\right)+d'_{V,12}\dfrac{b_1}{(1+b_1A_1)^2}V_1,\\
J_{1,3}=d'_{V,21}\dfrac{1}{1+b_2A_2},\\
J_{1,4}=-d'_{V,21}\dfrac{b_2}{(1+b_2A_2)^2}V_2,\\
J_{2,2}=-r_A\dfrac{A_1}{K'_{A,1}}-d'_{A,21}\dfrac{A_2}{A_1},\\
J_{2,4}= d'_{A,21},\\
J_{3,1}=d'_{V,12}\dfrac{1}{1+b_1A_1},\\
J_{3,2}=-d'_{V,12}\dfrac{b_1}{(1+b_1A_1)^2}V_1,\\
J_{3,3}=-r_V\dfrac{A_2}{A_2+a_2}\dfrac{V_2}{K_{V,2}}-d'_{V,12}\dfrac{1}{1+b_1A_1}\dfrac{V_1}{V_2}, \\
J_{3,4}=   r_V\dfrac{a_2}{(A_2+a_2)^2}V_2\left(1-\dfrac{V_2}{K_{V,2}}\right)+d'_{V,21}\dfrac{b_2}{(1+b_2A_2)^2}V_2,\\
J_{4,2}= d'_{A,12},\\
J_{4,4}= -r_A\dfrac{A_2}{K'_{A,2}}-d'_{A,12}\dfrac{A_1}{A_2}.\\
\end{array}
$$
The coexistence equilibrium is LAS if the following Routh-Hurwitz conditions are satisfied:
$$C_1>0 \mbox{ and } C_2>0 \mbox{ and } C_3>0 \mbox{ and } C_4>0$$
where
\begin{dmath*}
  C_1=  -J_{{4,2}}J_{{2,4}}+J_{{4,4}}J_{{3,3}}+J_{{4,4}}J_{{2,2}}+J_{{4,4}}J_{
{1,1}}-J_{{3,1}}J_{{1,3}}+J_{{3,3}}J_{{2,2}}+J_{{3,3}}J_{{1,1}}+J_{{2,
2}}J_{{1,1}} 
      -{\frac {J_{{4,2}}J_{{2,4}}J_{{3,3}}+J_{{4,2}}J_{{2,4}}J_{
{1,1}}+J_{{4,4}}J_{{3,1}}J_{{1,3}}-J_{{4,4}}J_{{3,3}}J_{{2,2}}-J_{{4,4
}}J_{{3,3}}J_{{1,1}}}{-J_{{4,4}}-J_{{3,3}}-J_{{2,2}}-J_{
{1,1}}}}
-{\frac {-J_{{4,4}}J_{{2,2}}J_{{1,1}}+J_{{3,1}}J_{{1,3}}J_{
{2,2}}-J_{{3,3}}J_{{2,2}}J_{{1,1}}}{-J_{{4,4}}-J_{{3,3}}-J_{{2,2}}-J_{
{1,1}}}}
\end{dmath*}

\begin{dmath*}
  C_2=   
J_{{4,2}}J_{{2,4}}J_{{3,3}}+J_{{4,2}}J_{{2,4}}J_{{1,1}}+J_{{4,4}}J_{{3
,1}}J_{{1,3}}-J_{{4,4}}J_{{3,3}}J_{{2,2}}-J_{{4,4}}J_{{3,3}}J_{{1,1}}-
J_{{4,4}}J_{{2,2}}J_{{1,1}}+J_{{3,1}}J_{{1,3}}J_{{2,2}}-J_{{3,3}}J_{{2
,2}}J_{{1,1}}
- \left( -J_{{4,4}}-J_{{3,3}}-J_{{2,2}}-J_{{1,1}}
 \right)  \left( J_{{4,2}}J_{{2,4}}J_{{3,1}}J_{{1,3}}-J_{{4,2}}J_{{2,4
}}J_{{3,3}}J_{{1,1}}-J_{{4,4}}J_{{3,1}}J_{{1,3}}J_{{2,2}}+J_{{4,4}}J_{
{3,3}}J_{{2,2}}J_{{1,1}} \right)  \left( -J_{{4,2}}J_{{2,4}}+J_{{4,4}}
J_{{3,3}}+J_{{4,4}}J_{{2,2}}+J_{{4,4}}J_{{1,1}}-J_{{3,1}}J_{{1,3}}+J_{
{3,3}}J_{{2,2}}+J_{{3,3}}J_{{1,1}}+J_{{2,2}}J_{{1,1}}-{\frac {N_2}{-J_{{4,4}}-J_{{3,3}}-J_{{2,2}}-J_{{1,1}}}} \right) ^{-1},
\end{dmath*}

$N_2=J_{{4,2}
}J_{{2,4}}J_{{3,3}}+J_{{4,2}}J_{{2,4}}J_{{1,1}}+J_{{4,4}}J_{{3,1}}J_{{
1,3}}-J_{{4,4}}J_{{3,3}}J_{{2,2}}-J_{{4,4}}J_{{3,3}}J_{{1,1}}-J_{{4,4}
}J_{{2,2}}J_{{1,1}}+J_{{3,1}}J_{{1,3}}J_{{2,2}}-J_{{3,3}}J_{{2,2}}J_{{
1,1}},$

\begin{dmath*}
  C_3=  -J_{{4,4}}-J_{{3,3}}-J_{{2,2}}-J_{{1,1}}
\end{dmath*}
and
\begin{dmath*}
  C_4 = J_{{4,2}}J_{{2,4}}J_{{3,1}}J_{{1,3}}-J_{{4,2}}J_{{2,4}}J_{{3,3}}J_{{1,
		1}}-J_{{4,4}}J_{{3,1}}J_{{1,3}}J_{{2,2}}+J_{{4,4}}J_{{3,3}}J_{{2,2}}J_
{{1,1}}.
\end{dmath*}
Since $J_{{i,i}}<0$ for $i=1,2,3,4$, we deduce that $$C_3>0.$$
Moreover, straightforward computations lead $C_4>0$. Therefore the coexistence equilibrium is LAS whenever $C_1>0$ and $C_2>0$. This ends the proof.

\section{Bifurcation diagrams with respect to other parameters}\label{sec: appendix bif analysis other paras}

In the main text, we report on steady state behaviour under changes to anthropization in patch 1 ($\alpha_1$). We further performed a numerical bifurcation analysis in all other model parameters. We restricted our analysis to single parameter continuations only, fixing all parameters except the bifurcation parameter to $\alpha_1 = 0.2, \alpha_2=0.2, r_V = 3.0 $, $ r_A = 0.3 $, $ a_1 = 0.5 $, $ a_2 = 0.4 $, $ K_{V,1} = 1.0 $, $ K_{V,2} = 1.1 $, $ K_{A,1} = 0.5 $, $ K_{A,2} = 0.4 $, $ d_{V,12} = d_{V,21} = d_{A,12} = d_{A,21} = 0.2 $, $ c_V = 0.7 $, $ c_A = 0.3 $, $ b_1 = 1.0 $, $ b_2 = 1.1 $, $ \mu_{V,1} = 1.0 $, $ \mu_{V,2} = 1.1 $, $ \mu_{A,1} = 0.1 $, $ \mu_{A,2} = 0.11 $. These parameters were chosen to be close to the bifurcation between the coexistence and vector extinction steady state to ensure that the potential for bifurcations to check how each parameter affects the bifurcation, even if changes have a small quantitative impact. 

First, our numerical continuation confirmed that changing anthropization in patch 2 ($\alpha_2$) has the same qualitative impact as changing anthropization in patch 1 ($\alpha_1$) (c.f., \Cref{fig: contfigure}B and \Cref{fig: bif diags other paras}U). 

The effect of most other parameters (all except $n$) was monotonic. Increases in one of $b_1$, $c_A$, $c_V$, $K_{A,i}$, $r_A$, $r_V$ all cause a bifurcation from vector extinction to coexistence, with vector abundance increasing beyond the bifurcation point and animal densities increasing or remaining constant (if bifurcation parameter only affects vector dynamics) either side of the bifurcation (except $c_A$ which causes an increase of animal density in one patch and a decrease in the other because it represents the half-saturation constant of the migration term). Vice versa, increases in one of $b_2$, $a_i$, and any of the migration and mortality parameters cause a bifurcation from the coexistence state to the vector extinction state, with a monotonic decrease of the vector densities before reaching the bifurcation. Animal densities along these branches are either constant ($b_2$, $d_{V,12} = d_{V,21}$, $a_i$, $\mu_{V,i}$), decreasing ($\mu_{A,i}$) or have opposite effects on both animal densities ($d_{A,12} = d_{A,21}$, $d_{V,12} = d_{V,21} = d_{A,12} = d_{A,21}$. Changes in vector carrying capacities hat negligible impact on steady state densities, because vector growth was limited by animal abundance for the parameter values chosen. Finally, changes in the parameter $n$ describing the steepness of the migration strength functional response to changes in anthropization causes a bifurcation from vector extinction to coexistence. In contrast to other parameters, changes in $n$ causes the vector and animal abundance in one patch to increase, the animal abundance in the other patch to decrease, and the vector density in the other patch to increase initially before dropping again to near-zero values.

\begin{figure}
    \centering
    \includegraphics[width=\linewidth]{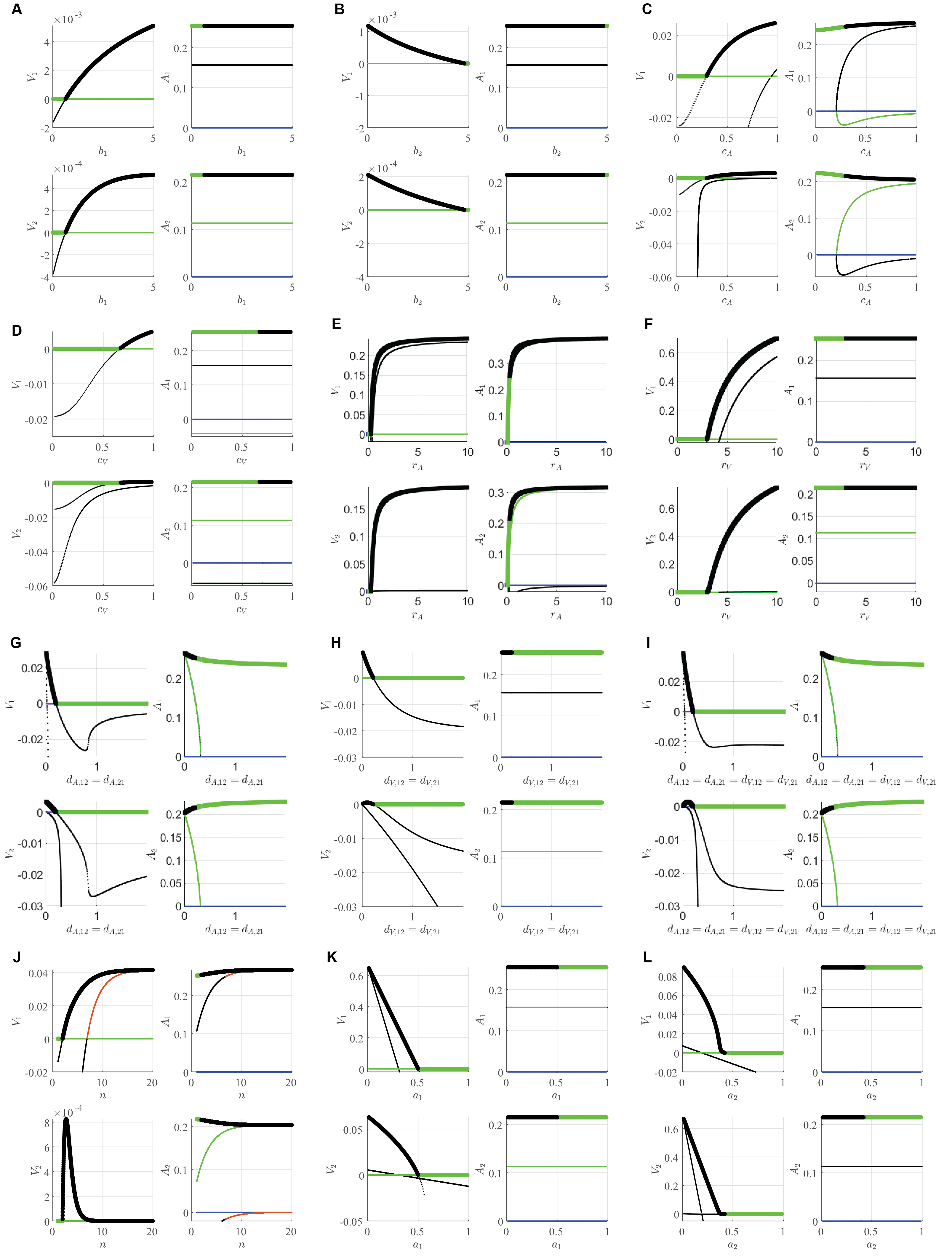}
    \caption{Bifurcation diagrams.}
\end{figure}
\addtocounter{figure}{-1}
\begin{figure}
    \centering
    \includegraphics[width=\linewidth]{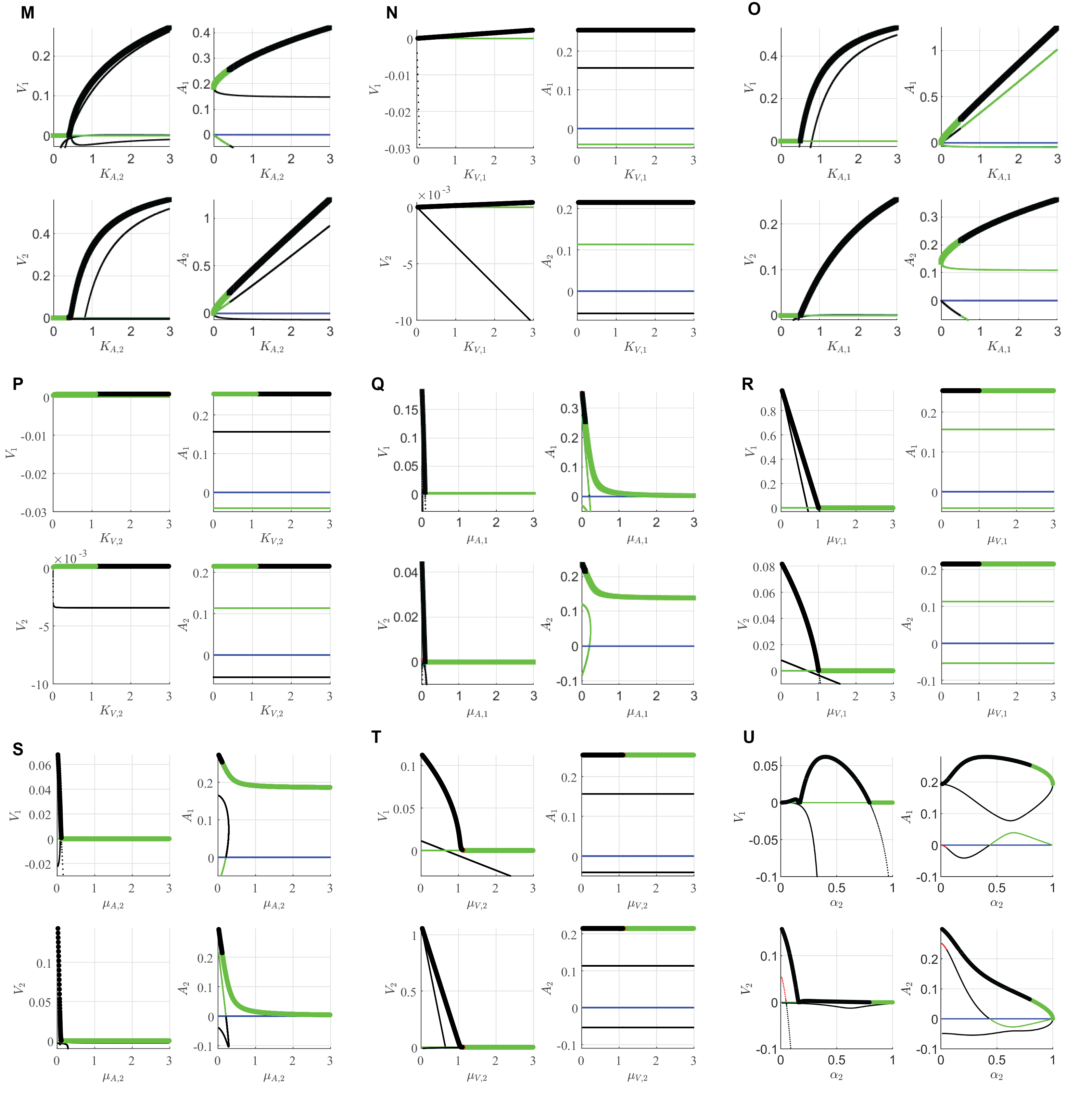}
    \caption{Bifurcation diagrams. Different panels show bifurcation diagrams for different bifurcation parameters. Thick curves represent stable solutions; thin curves represent unstable solutions. The colour of the curve distinguish between different equilibria: black curves are coexistence equilibria; green curves are vector extinction equilibria; blue curves are extinction equilibria; red curves are equilibria in which only $V_1$ goes extinct (never biologically relevant).}
    \label{fig: bif diags other paras}
\end{figure}

\end{document}